\documentclass[12pt]{article}

\usepackage[a4paper,top=2.5cm,bottom=2.5cm,left=2.5cm,right=2.5cm,marginparwidth=1.75cm]{geometry}
\usepackage{lineno}
\usepackage{natbib}
\usepackage{amssymb}
\usepackage{bm}
\usepackage{authblk}
\usepackage{amsmath}
\usepackage{amsthm}
\usepackage{epsfig}
\usepackage{graphicx}
\usepackage{graphics}
\usepackage{float}
\usepackage{subfigure}
\usepackage{multirow}
\usepackage{color}
\usepackage{fullpage}
\usepackage[normalem]{ulem} 
\usepackage{makeidx}
\usepackage{xspace}
\usepackage{wrapfig}
\usepackage{setspace}
\usepackage{kotex}
\usepackage{url}
\usepackage{makecell}  
\makeindex

\newtheorem{proposition}{Proposition}
\newtheorem{theorem}{Theorem}

\newtheorem{lemma}{Lemma}

\newcommand{\xb}{{\bm{x}}}
\newcommand{\Xb}{{\bm{X}}}

\newcommand{\ub}{{\bm{u}}}
\newcommand{\Ub}{{\bm{U}}}

\newcommand{\betab}{{\bm{\beta}}}
\newcommand{\gammab}{{\bm{\gamma}}}

\newcommand{\alphab}{{\bm{\alpha}}}

\newcommand{\thetab}{{\bm{\theta}}}

\newcommand{\Gb}{{\bm{G}}}
\newcommand{\Thetab}{{\bm{\Theta}}}
\newcommand{\Omegab}{{\bm{\Omega}}}

\providecommand{\keywords}[1]
{
  \small	
  \textbf{\textit{Keywords: }} #1
}

\begin{document}

\title{Semiparametric robust mixture of experts based on nonparametric maximum likelihood}

\author[1]{Sangkon Oh}
\affil[1]{Department of Statistics and Data Science, Pukyong National University}

\author[2]{Victor H. Lachos}
\affil[2]{Department of Statistics, University of Connecticut}

\author[3]{Byungtae Seo}
\affil[3]{Department of Statistics, Sungkyunkwan University}

\date{}

\maketitle

\begin{abstract}
The mixture of experts (MoE) model provides a flexible approach for modeling heterogeneous regression relationships by allowing covariate-dependent mixing through a gating network, but most existing MoE models rely on parametric assumptions for expert error distributions, typically Gaussian, which can lead to inefficiency and sensitivity to outliers or heavy-tailed behavior when misspecified. We propose a semiparametric MoE model in which each expert error distribution is represented as a nonparametric Gaussian scale mixture estimated via nonparametric maximum likelihood, relaxing parametric assumptions within the Gaussian scale-mixture class while preserving the interpretability and structure of the MoE framework. The resulting model adapts to complex error structures, improves robustness under contamination and heavy tails, and remains competitive under well-specified Gaussian settings, providing a practical and theoretically grounded alternative to parametric MoE formulations.
\end{abstract}

\keywords{Mixture of experts, Semiparametric mixture models, Nonparametric maximum likelihood estimator, Robust estimation}

\section{Introduction} \label{sec:intro}

The mixture of experts (MoE) model, originally introduced by \citet{jacobs1991adaptive}, provides a regression specification in which both component-specific models and mixing proportions depend on covariates. By incorporating a gating network that probabilistically assigns observations to latent experts based on predictors, the MoE framework extends classical finite mixture models and enables flexible modeling of heterogeneous data structures.

Subsequent developments have focused on improving clustering accuracy, predictive performance, and robustness. For example, \citet{nguyen2016laplace} considered Laplace-distributed expert errors as a robust alternative to Gaussian assumptions. Extensions based on heavy-tailed and asymmetric distributions were proposed by \citet{chamroukhi2016robust} and \citet{chamroukhi2017skew}, who developed $t$ and skew-$t$ MoE models to better accommodate non-Gaussian behavior. Related work has explored expert models based on scale mixtures of normal distributions to capture additional flexibility in error structure \citep{mirfarah2021mixture}. More recent contributions include MoE formulations with Gaussian location-scale mixtures for joint modeling of covariates and responses \citep{oh2023merging}, as well as partially linear expert models designed to increase structural flexibility \citep{hwang2025mixture}.
Despite these advances, most existing MoE approaches remain fundamentally parametric with respect to expert error distributions. While heavy-tailed or skewed specifications improve robustness relative to Gaussian models, they still impose rigid distributional forms that may be inadequate when the true error structure deviates substantially from assumed families. Residual misspecification can therefore limit both estimation stability and predictive performance.

To address this limitation, we introduce a semiparametric MoE approach in which each expert error distribution is modeled as an unknown Gaussian scale mixture. The mixing distributions are estimated via a nonparametric maximum likelihood estimator (NPMLE), allowing the error structure to adapt flexibly to the data without imposing restrictive parametric assumptions. The theoretical foundation of NPMLE traces back to the seminal work of \citet{kiefer1956consistency} and has been extensively developed in the literature \citep{Lindsay_1995, vvt96, mv00}. A growing body of research demonstrates the effectiveness of NPMLE-based modeling in diverse settings, including robust regression, heteroscedastic modeling, survival analysis, and mixture models \citep{SL15, XYS16, Seo_2017, AFT_NGSM, seo2024adaptive, lee2024finite, oh2024semiparametric, park2025penalized, Oh2026}.

The proposed method is motivated by the need for expert models that remain reliable in the presence of unknown or complex error structures. In practical applications, expert-specific error distributions are rarely known a priori and may exhibit heavy tails, contamination, or approximately Gaussian behavior. By learning each expert error distribution nonparametrically, the proposed approach provides adaptive robustness while preserving the interpretability and structural advantages of the MoE framework. We establish key theoretical properties of the resulting estimator, including identifiability, consistency, and monotonicity of the estimation algorithm, thereby establishing new theoretical guarantees for semiparametric mixture models. Extensive simulation studies and real data analyses further demonstrate that the proposed method achieves greater robustness and estimation stability than parametric MoE models, particularly in the presence of heavy-tailed errors or contamination.

The remainder of this paper is organized as follows. Section~\ref{sec:review} reviews the MoE framework, Gaussian scale mixture distributions, and nonparametric maximum likelihood estimation. Section~\ref{sec:proposed} presents the proposed semiparametric MoE model and its EM-based estimation procedure. Simulation studies are reported in Section~\ref{sec:sim}, followed by real data applications in Section~\ref{sec:real}. Section~\ref{sec:disc} concludes with a discussion and future research directions.

\section{Literature Review} \label{sec:review}
\subsection{Mixture of experts} 

Let $Y$ be a univariate response variable and let $\boldsymbol{X}=(1,\bm{U}^\top)^\top$ denote a $(p+1)$-dimensional covariate vector including an intercept.
Let $Z$ be a latent variable indicating the component membership of each observation.
Following \citet{jacobs1991adaptive}, the conditional distribution of $y$ given $\boldsymbol{x}$ is modeled as a finite mixture of regression models,
\begin{equation*}
p(y \mid \boldsymbol{x})
=
\sum_{k=1}^K
\pi(\boldsymbol{x};\boldsymbol{\alpha}_k)\,
\frac{1}{\sigma_k}
\phi\!\left(
\frac{y - \boldsymbol{x}^\top \boldsymbol{\beta}_k}{\sigma_k}
\right),
\end{equation*}
where $\pi(\boldsymbol{x};\boldsymbol{\alpha}_k)$ denotes a covariate-dependent mixing proportion satisfying
$0<\pi(\boldsymbol{x};\boldsymbol{\alpha}_k)<1$ and $\sum_{k=1}^K \pi(\boldsymbol{x};\boldsymbol{\alpha}_k)=1$.
Here, $\boldsymbol{\beta}_k$ is a $(p+1)$-dimensional vector of regression coefficients for the $k$th component,
$\sigma_k > 0$ is a component-specific scale parameter,
and $\phi(\cdot)$ denotes the standard normal density.
The mixing proportions are specified through a multinomial logistic function,
\[
\pi(\boldsymbol{x};\boldsymbol{\alpha}_k)
=
\frac{\exp(\boldsymbol{x}^{\top}\boldsymbol{\alpha}_k)}
{\sum_{j=1}^K \exp(\boldsymbol{x}^{\top}\boldsymbol{\alpha}_j)},
\qquad k=1,\ldots,K,
\]
where $\boldsymbol{\alpha}_k$ is a $(p+1)$-dimensional parameter vector.
For identifiability, we impose the constraint $\boldsymbol{\alpha}_K=\boldsymbol{0}$.

\vspace{0.3cm}

From a prediction perspective, the MoE model yields the conditional mean of the response for a new covariate value $\boldsymbol{x}_*$ as
\begin{equation*}
E(Y \mid \boldsymbol{X}=\boldsymbol{x}_*)
=
\sum_{k=1}^K \pi(\boldsymbol{x}_*;\boldsymbol{\alpha}_k)\,
\boldsymbol{x}_*^{\top}\boldsymbol{\beta}_k.
\end{equation*}
This representation highlights the prediction mechanism of the MoE model as a weighted combination of component-specific regression functions.
In this context, the mixing proportions $\pi(\boldsymbol{x};\boldsymbol{\alpha}_k)$, $k=1,2,\ldots, K$, are commonly referred to as the gating network, whereas the component-specific regression functions $\boldsymbol{x}^{\top}\boldsymbol{\beta}_k$ ($k=1,2,\ldots, K$) are referred to as the expert networks.
Accordingly, the MoE model can be interpreted as an ensemble prediction approach in which the contribution of each expert network is adaptively determined by the gating network through the covariates, as illustrated in Figure~\ref{fig:moe_structure}.

\begin{figure}[h]
\centering
\includegraphics[width=0.7\textwidth]{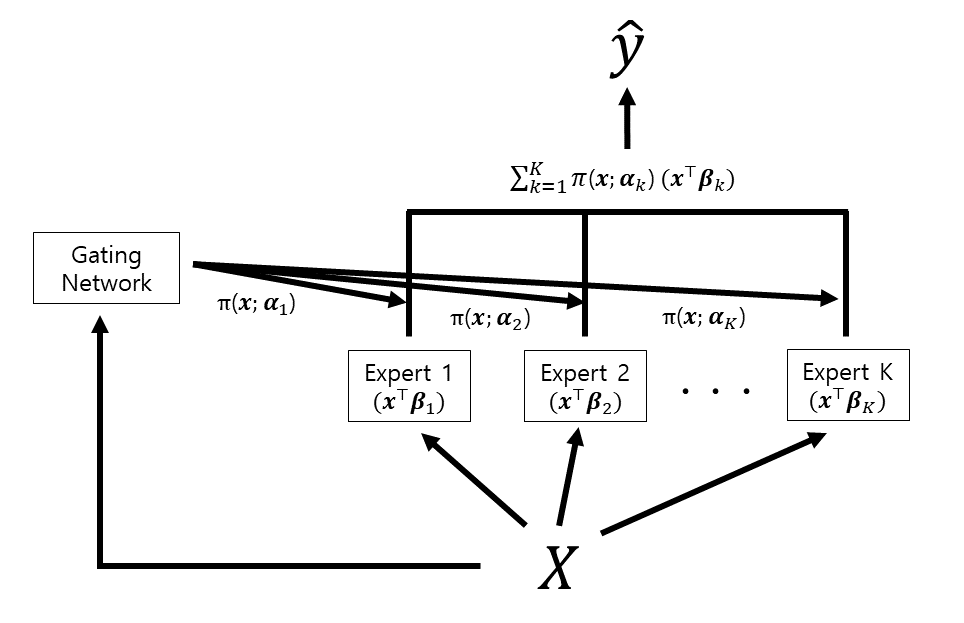}
\caption{Prediction mechanism of the mixture of experts model.}
\label{fig:moe_structure}
\end{figure}

While the classical MoE formulation assumes Gaussian expert errors, this parametric specification may be restrictive in practice when the true error distribution deviates from normality. Under such misspecification, Gaussian expert models can suffer from unstable parameter estimation and degraded predictive performance. Although alternative parametric error distributions have been proposed to improve robustness, they still rely on parametric assumptions that may be insufficient to model complex error structures. This limitation motivates the development of semiparametric expert models that allow the error distribution to adapt flexibly to the data, thereby enhancing robustness without sacrificing efficiency in either normal or nonnormal settings.

\subsection{Gaussian scale mixture distributions}

A Gaussian scale mixture density is defined as
\begin{align}
f(\epsilon;G)
=
\int \frac{1}{\sigma}
\phi\!\left(\frac{\epsilon}{\sigma}\right)
\, dG(\sigma),
\label{NGSM}
\end{align}
where $\phi(\cdot)$ denotes the standard normal density and $G$ is a mixing distribution supported on $\mathbb{R}^+=(0,\infty)$. We write $\epsilon \sim \mathrm{GSM}(0,G)$.
Different choices of $G$ yield a wide class of probability distributions.
For instance, if $G$ is a degenerate distribution with all its mass concentrated at $\sigma_0$, then $f(\epsilon)$ reduces to a normal density with mean zero and variance $\sigma_0^2$.
If $G$ is a discrete distribution with two support points, the resulting density corresponds to a contaminated normal distribution.
More generally, any finite scale mixture of zero-mean normal distributions can be obtained when $G$ is a discrete distribution with finite support.
Moreover, when $\sigma^2$ follows an inverse-gamma distribution, $f(\epsilon)$ corresponds to a Student's $t$ distribution.
It is well known that the class of Gaussian scale mixtures in~\eqref{NGSM} includes many important symmetric unimodal distributions, such as the normal, $t$, Logistic, and Laplace distributions; see, for example, \citet{andrews1974scale}, \citet{efron1978broad}, and \citet{west1987scale}.

Several authors have proposed specifying $G$ using parametric families to construct flexible error distributions.
For example, \citet{garay2016nonlinear}, \citet{garay2017linear}, and \citet{zeller2019finite} considered parametric forms of $G$ to model various unimodal error structures.
Such approaches can be effective when prior knowledge about the underlying error distribution is available.
However, imposing a parametric form on $G$ may lead to model misspecification if the assumed family is incorrect.

To avoid these limitations, one may leave the mixing distribution $G$ unspecified and estimate it nonparametrically based on given data. This semiparametric approach eliminates the need to select a parametric family for $G$ and allows the data to determine the shape of the error distribution flexibly, thereby enhancing robustness to distributional misspecification.

\subsection{Nonparametric maximum likelihood estimator}

In this paper, the mixing distribution $G$ is defined on the positive real line and is left completely unspecified.
Because $G$ is nonparametric, obtaining the MLE for $G$ in \eqref{NGSM} requires optimization over an infinite-dimensional space of probability measures.
The key idea is to reformulate the likelihood maximization problem so that the infinite-dimensional optimization over $G$ is transformed into a finite-dimensional problem defined on a convex set. Classical results in nonparametric mixture theory show that the maximizer can be represented as a discrete distribution with finite support \citep{Lindsay_1995}. Accordingly, the search for the NPMLE may be restricted to probability measures with finite support without loss of generality.

Let $\epsilon_1,\ldots,\epsilon_n$ denote independent realizations from a Gaussian scale mixture density of the form \eqref{NGSM}.
For a given $G$, the likelihood contribution of the $i$th observation can be written as
\[
L_i(G)=\int L_i(\sigma)\,dG(\sigma),
\qquad
L_i(\sigma)=\frac{1}{\sigma}\phi\!\left(\frac{\epsilon_i}{\sigma}\right).
\]
Collecting these terms, the likelihood can be represented by the mixture likelihood vector
\[
\boldsymbol{L}(G) = \bigl(L_1(G),\ldots,L_n(G)\bigr)^\top,
\]
which depends on $G$.
Rather than optimizing directly over the space of all probability measures $G$, the feasible region of the likelihood vectors is characterized as
\[
\Omegab = \left\{ \boldsymbol{L}(G) \mid G \text{ is a probability measure on } \mathbb{R}^+ \right\}.
\]
It can be shown that $\Omega$ is the convex hull of the curve
\[
\Gamma = \left\{ \boldsymbol{L}(\sigma) = \bigl(L_1(\sigma),\ldots,L_n(\sigma)\bigr)^\top : \sigma \in \mathbb{R}^+ \right\}.
\]
The log-likelihood maximization problem can therefore be reformulated as maximizing a concave function over the convex set $\Omega$.
Under mild regularity conditions, the maximizer exists and lies on the boundary of $\Omega$, and the resulting NPMLE $\hat{G}$ is discrete with at most $n$ support points.

To characterize the optimality of the NPMLE, directional derivatives of the log-likelihood functional play a fundamental role.
Let
\[
\ell(G)=\sum_{i=1}^n \log L_i(G),
\]
denote the log-likelihood regarded as a functional of the mixing distribution $G$.
For any candidate scale value $\sigma \in \mathbb{R}^+$, consider a one-dimensional perturbation of $G$ defined by
\[
G_\eta = (1-\eta)G + \eta\,\delta_\sigma, \qquad \eta \in [0,1],
\]
where $\delta_\sigma$ denotes the degenerate distribution placing unit mass at $\sigma$.
The directional derivative of $\ell(G)$ in the direction $\delta_\sigma - G$ is then given by
\[
D(\sigma;G)
= \lim_{\eta \to 0+}
\frac{
{\ell}( G_\eta )
- {\ell}(G)
}{\eta}.
\]

A fundamental result in nonparametric likelihood theory \citep{Lindsay_1995} states that a mixing distribution $\hat{G}$ maximizes the log-likelihood if and only if
\[
D(\sigma;\hat{G}) \le 0 \quad \text{for all } \sigma \in \mathbb{R}^+,
\]
with equality holding at every support point of $\hat{G}$.
This condition is both necessary and sufficient and provides a complete characterization of the NPMLE.
This characterization forms the theoretical foundation for a class of constructive algorithms for computing the NPMLE, including the vertex direction method (VDM; \citealp{Bohning85}), the vertex exchange method (VEM; \citealp{Bohning86}), the intra-simplex direction method (ISDM; \citealp{LK92}), and the constraint Newton method for multiple support points (CNM; \citealp{Wang07}).

\section{Proposed Method} \label{sec:proposed}

\subsection{Semiparametric mixture of experts}

We propose a semiparametric MoE model in which the conditional distribution of $Y$ given $\boldsymbol{x}$ is specified as
\begin{equation}
p(y \mid \boldsymbol{x}; \Thetab, \Gb)
=
\sum_{k=1}^K
\pi(\boldsymbol{x};\boldsymbol{\alpha}_k)
\int
\frac{1}{\sigma}
\phi\!\left(
\frac{y - \boldsymbol{x}^\top \boldsymbol{\beta}_k}{\sigma}
\right)
\, dG_k(\sigma),
\label{eq:spmoe}
\end{equation}
where $\Thetab=\{\boldsymbol{\alpha}_k,\boldsymbol{\beta}_k: k=1,\ldots,K\}$ and $\bm{G} = (G_1, G_2, \ldots, G_K)$.
Model~\eqref{eq:spmoe} generalizes classical MoE models by representing each component error distribution as a Gaussian scale mixture with an unspecified mixing distribution $G_k$. This semiparametric specification relaxes fixed parametric error assumptions, enabling component error distributions to be estimated directly from the data while preserving the gating–expert structure of the MoE framework and enhancing robustness to heavy-tailed errors and outliers.

\citet{Teicher1963} established identifiability for finite mixture models using necessary and sufficient conditions. \cite{hennig2000identifiability} showed that finite mixtures of regressions with Gaussian
components are identifiable when the mixing proportions do not depend on covariates
and the support of $\boldsymbol{U}$ contains an open set in $\mathbb{R}^p$.
Furthermore, \citet{oh2024semiparametric} established the identifiability of Gaussian scale mixture error distributions based on the NPMLE when the mixing proportions do not depend on covariates. \citet{XYS16} showed identifiability for semiparametric mixture models with nonparametric Gaussian scale mixture components, but their result is restricted to mixtures with a limited number of components ($K \le 3$) and a common mixing distribution ($G_1 = \cdots = G_K$).

Motivated by these limitations, we first establish an identifiability result for finite mixtures whose component densities are Gaussian scale mixtures with unspecified mixing distributions. The result accommodates an arbitrary number of components and component-specific nonparametric mixing distributions $G_1, G_2,\ldots, G_K$. To the best of our knowledge, existing identifiability results for semiparametric mixtures do not cover this level of generality, as they either restrict the number of components or require a common mixing distribution across components. Lemma~1 therefore provides the key theoretical bridge from classical finite-mixture identifiability to the proposed semiparametric MoE model with distinct expert-specific error distributions.

The following assumptions specify regularity conditions for the covariate distribution, the expert regression parameters, and the class of scale mixing distributions. Assumptions~\textup{(A1)}--\textup{(A2)} ensure sufficient
variation in the covariates and separation of the expert regression functions.
Assumption~\textup{(A3)} imposes an asymptotic separation condition on the scale mixing distributions through the decay rates of the corresponding characteristic functions. Assumptions~\textup{(A4)}--\textup{(A5)} provide regularity of the scale mixing distribution class and compactness of the finite-dimensional parameter space, which are used in the consistency argument.

Let \(\mathcal G\) be a class of scale mixing distributions. For
\(G\in\mathcal G\), let \(\Psi_G\) denote the characteristic function of the
centered Gaussian scale mixture distribution induced by \(G\).

\medskip
\noindent\textbf{Assumptions.}
\begin{enumerate}
\item[(A1)] The support of $\boldsymbol{U}$ contains a nonempty open set in $\mathbb{R}^p$.

\item[(A2)] The regression coefficient vectors
\(\boldsymbol\beta_1,\ldots,\boldsymbol\beta_K\) are pairwise distinct.

\item[(A3)] For any distinct \(G,G'\in\mathcal G\), 
\[
\frac{\Psi_{G'}(t)}{\Psi_G(t)} \to 0 \textrm{ or } \infty
\qquad\text{as } |t|\to\infty .
\]

\item[(A4)] There exist constants $\ell>0$ and $M<\infty$ such that,
for every $G\in\mathcal G$,
\[
\operatorname{supp}(G)\subset[\ell,\infty)
\qquad\text{and}\qquad
\int_{\ell}^{\infty}\log\sigma\,dG(\sigma)\le M.
\]

\item[(A5)] The parameter space of
\[
\Theta=\{\boldsymbol\alpha_k,\boldsymbol\beta_k:k=1,\ldots,K\}
\]
is compact.
\end{enumerate}
Under these conditions, the following lemma formalizes identifiability for finite mixture models with distinct nonparametric Gaussian scale mixture components.

\begin{lemma}
\label{lem:gsm-ident}
Suppose that Assumption~\textup{(A3)} holds. Let
\(G_1,\ldots,G_K,\tilde G_1,\ldots,\tilde G_{\tilde K}\in\mathcal G\).
Assume that \(\mu_1,\ldots,\mu_K\) are pairwise distinct and that
\(\tilde\mu_1,\ldots,\tilde\mu_{\tilde K}\) are pairwise distinct.
Let \(\pi_1,\ldots,\pi_K\) and
\(\tilde\pi_1,\ldots,\tilde\pi_{\tilde K}\) satisfy
\[
\pi_k>0,\quad \sum_{k=1}^K \pi_k=1,
\qquad
\tilde\pi_j>0,\quad \sum_{j=1}^{\tilde K}\tilde\pi_j=1.
\]
If, for all \(x\in\mathbb R\),
\[
\sum_{k=1}^{K} \pi_k 
\int \frac{1}{\sigma}
\phi\!\left(\frac{x-\mu_k}{\sigma}\right)\, dG_k(\sigma)
=
\sum_{j=1}^{\tilde K} \tilde\pi_j 
\int \frac{1}{\sigma}
\phi\!\left(\frac{x-\tilde\mu_j}{\sigma}\right)\, d\tilde G_j(\sigma),
\]
then \(\tilde K = K\) and there exists a permutation \(\tau\) of
\(\{1,\ldots,K\}\) such that, for all \(k\),
\[
\pi_k = \tilde\pi_{\tau(k)},\qquad
\mu_k = \tilde\mu_{\tau(k)},\qquad
G_k = \tilde G_{\tau(k)}.
\]
\end{lemma}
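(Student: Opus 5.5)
We pass to characteristic functions and then use a Teicher-type ordering argument. The component density $\int \sigma^{-1}\phi((x-\mu)/\sigma)\,dG(\sigma)$ is the law of $\mu+\epsilon$ with $\epsilon\sim\mathrm{GSM}(0,G)$. Its characteristic function is therefore $e^{it\mu}\Psi_G(t)$, where
\[
\Psi_G(t)=\int e^{-\sigma^2t^2/2}\,dG(\sigma).
\]
This is real, strictly positive and even. Equality of the two mixture densities for all $x$ is then equivalent to
\[
\sum_{k=1}^K \pi_k e^{it\mu_k}\Psi_{G_k}(t)=\sum_{j=1}^{\tilde K}\tilde\pi_j e^{it\tilde\mu_j}\Psi_{\tilde G_j}(t),\qquad t\in\mathbb R .
\]
We move everything to one side and merge identical pairs. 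This gives a relation $\sum_{r=1}^R c_r e^{it\nu_r}\Psi_{H_r}(t)=0$, where the pairs $(\nu_r,H_r)$ are pairwise distinct and each $c_r$ is a difference of weights. The goal is to show that all $c_r=0$.

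Suppose not, and keep only the indices with $c_r\neq 0$. By (A3), any two distinct $H,H'\in\mathcal G$ satisfy $\Psi_{H'}/\Psi_H\to 0$ or $\infty$. This induces a strict total order on the distinct mixing distributions appearing. The order is transitive: if $\Psi_{H_2}/\Psi_{H_1}\to0$ and $\Psi_{H_3}/\Psi_{H_2}\to0$, then the product gives $\Psi_{H_3}/\Psi_{H_1}\to0$.

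Let $H^*$ be the slowest-decaying distribution, i.e. $\Psi_{H}/\Psi_{H^*}\to0$ for every other $H$ present. Divide the relation by $\Psi_{H^*}(t)$, which is legitimate since $\Psi_{H^*}>0$. Since $|e^{it\nu}|=1$, every term with $H_r\neq H^*$ tends to $0$. This leaves
\[
g(t):=\sum_{r:H_r=H^*} c_r e^{it\nu_r}\to 0\qquad (|t|\to\infty).
\]
The $\nu_r$ in this sum are pairwise distinct, because the pairs $(\nu_r,H_r)$ are distinct and $H_r=H^*$ is fixed. Hence $g$ is a nontrivial trigonometric polynomial, which is almost periodic. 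A Bohr almost periodic function tending to $0$ at infinity must vanish identically. Concretely, the mean of $|g|^2$ is
\[
\lim_{T\to\infty}\frac1{2T}\int_{-T}^T|g(t)|^2\,dt=\sum_r|c_r|^2>0,
\]
which contradicts $g\to0$. So all $c_r=0$.

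Coefficient matching then completes the argument. Since the $\mu_k$ are distinct, the pairs $(\mu_k,G_k)$ are distinct, and the same holds for the pairs $(\tilde\mu_j,\tilde G_j)$. If some pair $(\mu_k,G_k)$ did not occur among the $(\tilde\mu_j,\tilde G_j)$, its merged coefficient would be $\pi_k>0$, a contradiction; the same applies in reverse. This yields a bijection $\tau$ with $\mu_k=\tilde\mu_{\tau(k)}$ and $G_k=\tilde G_{\tau(k)}$. Vanishing of the merged coefficients then gives $\pi_k=\tilde\pi_{\tau(k)}$, and the bijection forces $\tilde K=K$.

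The main obstacle is the step where (A3) is used. It must supply a genuine total order, so that a unique dominant $H^*$ exists among all the $G$'s and $\tilde G$'s present. The $G_k$ themselves need not be distinct across $k$, and that case is handled exactly by grouping equal $H$'s, which reduces to the trigonometric-polynomial argument above. One further point needs attention: (A3) must be applied to the distinct elements of $\mathcal G$ as measures, which is consistent with $\Psi_G$ determining $G$.
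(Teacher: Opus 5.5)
Your proposal is correct and follows essentially the same route as the paper: pass to characteristic functions, totally order the distinct mixing distributions via (A3), divide by the dominant $\Psi_H$, and use the mean of $|g|^2$ to show the leading trigonometric polynomial vanishes. The differences are cosmetic: you merge terms by pairs $(\nu,H)$ and argue by contradiction on the nonzero coefficients, whereas the paper groups by $H_b$, peels off blocks inductively, and then matches the discrete measures $Q_b=\tilde Q_b$.
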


\begin{proof}
A proof is given in Appendix~A.
\end{proof}

Lemma~\ref{lem:gsm-ident} provides the identifiability of finite mixtures of Gaussian scale mixtures with distinct locations. Building on this result, we establish the identifiability of \eqref{eq:spmoe} in the following theorem.

\begin{theorem}
\label{thm:ident}
Suppose that Assumptions~\textup{(A1)}--\textup{(A3)} hold. If two sets
\(\{\alphab_k,\betab_k,G_k\}_{k=1}^K\) and
\(\{\tilde\alphab_j,\tilde\betab_j,\tilde G_j\}_{j=1}^{\tilde K}\) yield the
same conditional density \eqref{eq:spmoe} for all
\((y,\ub)\in\mathbb R\times\mathbb R^p\), then $\tilde K=K$ and there exists a permutation
$\tau$ of $\{1,\ldots,K\}$ such that
\[
\betab_k=\tilde\betab_{\tau(k)},\qquad
G_k=\tilde G_{\tau(k)},\qquad
\alphab_k
=
\tilde\alphab_{\tau(k)}-\tilde\alphab_{\tau(K)},
\qquad k=1,\ldots,K.
\]
\end{theorem}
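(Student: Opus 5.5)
The plan is to reduce to Lemma~\ref{lem:gsm-ident} pointwise in $\ub$, and then use the open set in (A1) to glue the pointwise permutations into one global permutation. Fix $\ub$ in the nonempty open set $O$ from (A1), and write $\xb=(1,\ub^\top)^\top$. For this fixed $\xb$, both sides of \eqref{eq:spmoe} are finite mixtures in $y$ of Gaussian scale mixtures. The component locations are $\mu_k=\xb^\top\betab_k$ and the weights are $\pi_k=\pi(\xb;\alphab_k)>0$.

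Lemma~\ref{lem:gsm-ident} needs pairwise distinct locations. By (A2), for each pair $k\neq k'$ the set $\{\ub:\xb^\top(\betab_k-\betab_{k'})=0\}$ is a proper affine hyperplane in $\mathbb R^p$, or empty when the slope parts coincide and the intercepts differ. The same holds for the $\tilde\betab_j$, which I will assume pairwise distinct under (A2) as well. Removing these finitely many hyperplanes from $O$ leaves an open dense subset $O^*\subset O$. For each $\ub\in O^*$ the lemma gives $\tilde K=K$ and a permutation $\tau_{\ub}$ with
\[
\pi(\xb;\alphab_k)=\pi(\xb;\tilde\alphab_{\tau_{\ub}(k)}),\qquad
\xb^\top\betab_k=\xb^\top\tilde\betab_{\tau_{\ub}(k)},\qquad
G_k=\tilde G_{\tau_{\ub}(k)} .
\]

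The main obstacle is that $\tau_{\ub}$ may a priori depend on $\ub$. Since there are only $K!$ permutations, $O^*=\bigcup_\tau A_\tau$ with $A_\tau=\{\ub\in O^*:\tau_{\ub}=\tau\}$. On $A_\tau$ we have $\xb^\top(\betab_k-\tilde\betab_{\tau(k)})=0$ for all $k$. If $\betab_k\neq\tilde\betab_{\tau(k)}$ for some $k$, then $A_\tau$ is contained in a hyperplane and has Lebesgue measure zero. But $O^*$ has positive measure, so some $\tau$ must have $\betab_k=\tilde\betab_{\tau(k)}$ for every $k$; fix this $\tau$.

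Once $\betab_k=\tilde\betab_{\tau(k)}$, the locations $\xb^\top\betab_k$ are distinct on $O^*$. So at every $\ub\in O^*$, the location equality forces the lemma's permutation to match component $k$ with $\tau(k)$, that is, $\tau_{\ub}=\tau$ on all of $O^*$. Consequently $G_k=\tilde G_{\tau(k)}$, and
\[
\pi(\xb;\alphab_k)=\pi(\xb;\tilde\alphab_{\tau(k)})\quad\text{for all } \ub\in O^* .
\]

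Taking logarithms of the ratio of the $k$th gate to the $K$th gate, and using $\alphab_K=\bm 0$, gives
\[
\xb^\top\alphab_k=\xb^\top(\tilde\alphab_{\tau(k)}-\tilde\alphab_{\tau(K)})\quad\text{for all } \ub\in O^* .
\]
This is an affine function of $\ub$ vanishing on an open set, so its coefficients vanish, which yields $\alphab_k=\tilde\alphab_{\tau(k)}-\tilde\alphab_{\tau(K)}$. Assumption (A3) enters only through the invocation of Lemma~\ref{lem:gsm-ident}.
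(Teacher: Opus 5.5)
Your proof is correct. It shares the paper's skeleton: apply Lemma~\ref{lem:gsm-ident} at covariate values off the finitely many hyperplanes where locations collide, handle the possible dependence of the matching permutation on $\ub$, and then read off $\betab_k$, $G_k$ and, via log-odds against the baseline expert, $\alphab_k$ from affine identities on an open set.

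The real difference is the gluing step. The paper fixes one generic point $\ub^\star$ and sets $\tau=\tau_{\ub^\star}$. It then asserts ``by continuity'' that the matching stays equal to $\tau$ on a neighbourhood $V$ of $\ub^\star$. Implicitly, the matched locations at $\ub^\star$ are separated, so a small perturbation of $\ub$ cannot make a component switch partners. Only then does it deduce $\betab_k=\tilde\betab_{\tau(k)}$ from linearity on $V$.

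You reverse the order. There are only $K!$ permutations, and a nonempty open set cannot be covered by finitely many affine hyperplanes (or empty sets). Hence some $\tau$ gives $\betab_k=\tilde\betab_{\tau(k)}$ exactly. Distinctness of the $\tilde\betab$-locations then forces $\tau_{\ub}=\tau$ on all of $O^*$ by pure algebra, which also uses the uniqueness of the lemma's matching when locations are distinct.

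What each route buys:
\begin{itemize}
\item Yours makes rigorous what the paper leaves as a one-line appeal to continuity. It delivers the matching globally on $O^*$ rather than on a local neighbourhood. It also treats intercepts directly, through affine hyperplanes, instead of the paper's reduction to the no-intercept case.
\item The paper's gluing step, once the $\varepsilon$-argument is filled in, needs only continuity of the location maps near a single good point, with no covering or measure argument.
\end{itemize}

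Both arguments rely equally on the implicit requirement that the competing representation also have pairwise distinct $\tilde\betab_j$ and $\tilde G_j\in\mathcal G$, which you, like the paper, assume.
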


\begin{proof}
A proof is given in Appendix~B.
\end{proof}

The consistency of structural parameters and latent mixing distributions in semiparametric mixture models was established by \citet{kiefer1956consistency}. Since the consistency of the maximum likelihood estimator requires identifiability of
the underlying model, Theorem~\ref{thm:ident} provides the required identifiability condition for our setting, where the number of experts $K$ is known and fixed.

\begin{theorem}
\label{thm:consistency}
Let $(\Thetab_0,\Gb_0)$ denote the true parameters and mixing
distributions, and let $h_0$ denote the marginal density of $\Ub$.
Suppose that Assumptions~\textup{(A1)}--\textup{(A5)} hold and that
the number of experts $K$ is known and fixed. Assume further that
\[
\|h_0\|_{\infty}<\infty,
\qquad
\mathbb E_0[-\log h_0(\Ub)]<\infty,
\qquad
\mathbb E_0\|\Ub\|^2<\infty,
\qquad
\mathbb E_0Y^2<\infty.
\]
Then, under model~\eqref{eq:spmoe}, any global maximizer
$(\widehat{\Thetab}_n,\widehat{\Gb}_n)$ of the log-likelihood is
consistent for $(\Thetab_0,\Gb_0)$, up to a permutation of the
component labels.
\end{theorem}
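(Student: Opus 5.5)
The plan is to follow Kiefer and Wolfowitz (1956), in the form of Wald's consistency argument adapted to semiparametric mixtures. Work with the joint density $f(y,\ub;\Thetab,\Gb)=p(y\mid\xb;\Thetab,\Gb)\,h_0(\ub)$. Since $h_0$ does not involve the parameters, maximizing the conditional log-likelihood is equivalent to maximizing the joint one.

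\textbf{Step 1: compactify the parameter space.} Equip each class of mixing distributions with the vague topology on $[\ell,\infty]$, so that a sequence $G_k^{(m)}$ may lose mass to $\sigma=\infty$. Take $\bar{\mathcal G}$ to be the set of sub-probability measures on $[\ell,\infty)$. This set is vaguely compact and metrizable, for instance by $d(G,G')=\int|G(\sigma)-G'(\sigma)|e^{-\sigma}\,d\sigma$. Combined with (A5), the extended parameter space $\bar{\Thetab}\times\bar{\mathcal G}^K$ is compact. The map $(\Thetab,\Gb)\mapsto p(y\mid\xb;\Thetab,\Gb)$ extends continuously to it, for two reasons:
\begin{itemize}
\item the kernel $\sigma^{-1}\phi((y-\xb^\top\betab)/\sigma)$ is bounded by $(\ell\sqrt{2\pi})^{-1}$ by (A4);
\item the kernel vanishes as $\sigma\to\infty$.
\end{itemize}
Mass escaping to infinity therefore contributes zero density, and the continuity required by Kiefer--Wolfowitz holds.

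\textbf{Step 2: integrability.} Two bounds are needed.
\begin{itemize}
\item \emph{Upper bound.} The density $f$ is bounded above by $\|h_0\|_\infty/(\ell\sqrt{2\pi})$, a finite constant. This gives $\mathbb E_0\sup_{\rho}\log f<\infty$ over every small neighborhood $\rho$, which is the condition needed for the Wald-type upper bound.
\item \emph{True density.} I need $\mathbb E_0|\log f_0|<\infty$. Jensen's inequality on the inner integral gives
\[
\log\int\sigma^{-1}\phi(r/\sigma)\,dG(\sigma)\ \ge\ -\tfrac12\log(2\pi)-\int\log\sigma\,dG-\tfrac{r^2}{2\ell^2}.
\]
Its right side is integrable by (A4), by compactness of $\betab$, and by $\mathbb E_0Y^2,\ \mathbb E_0\|\Ub\|^2<\infty$. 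The gating weights are bounded below on the compact $\Thetab$, provided $\|\xb\|$ enters only linearly in $\log\pi$, and this is integrable. Adding $\mathbb E_0[-\log h_0(\Ub)]<\infty$ completes the bound.
\end{itemize}

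\textbf{Step 3: identifiability on the compactified space.} This is the main obstacle. Theorem~\ref{thm:ident} gives identifiability only among genuine probability measures in $\mathcal G$, whereas the limit points in Step 1 may be defective sub-probability measures. I would show that a defective limit cannot match $p_0$. If $G_k$ had total mass $c_k<1$, the limiting density would integrate in $y$ to $\sum_k\pi(\xb;\alphab_k)c_k<1$ on an open set of $\xb$. This contradicts equality with $p_0$, which integrates to one. Hence any point $(\Thetab^*,\Gb^*)$ with $\mathbb E_0\log(f^*/f_0)=0$ satisfies $f^*=f_0$ almost everywhere. By continuity in $y$ and (A1), equality holds for all $(y,\ub)$ in an open set, and analytic continuation in $\ub$ extends it to all of $\mathbb R^p$. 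The limiting $G_k^*$ are then proper. I still need $G_k^*\in\mathcal G$, or at least that (A3) holds between $G_k^*$ and the true $G_{0,j}$, so that Theorem~\ref{thm:ident} applies. I would handle this by assuming $\mathcal G$ is closed, or by noting that the argument of Lemma~\ref{lem:gsm-ident} only needs the separation property against the true components. This delicate point is where I expect care to be needed.

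\textbf{Step 4: the Wald argument.} Fix an $\varepsilon$-neighborhood $N_\varepsilon$ of the true-parameter equivalence class, namely all label permutations of $(\Thetab_0,\Gb_0)$. The remaining set is compact. For each point in it, Step 3 and the strict Jensen (Kullback--Leibler) inequality give $\mathbb E_0\log(f/f_0)<0$. Monotone convergence, using the upper bound from Step 2, gives a neighborhood $\rho$ on which $\mathbb E_0\sup_{\rho}\log(f/f_0)<0$. Cover the remaining set by finitely many such neighborhoods and apply the strong law of large numbers to each. It follows that almost surely, for large $n$, the supremum of the likelihood outside $N_\varepsilon$ is strictly smaller than the likelihood at the truth. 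Any global maximizer $(\widehat{\Thetab}_n,\widehat{\Gb}_n)$ must therefore lie in $N_\varepsilon$ eventually. This yields consistency up to label permutation: Euclidean convergence for $\Thetab$ and vague (equivalently weak, since the limits are proper) convergence for each $G_k$.
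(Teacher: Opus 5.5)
Your proposal follows essentially the same route as the paper. The paper likewise passes to the joint density $p_{\Thetab,\Gb}(y\mid\xb)h_0(\ub)$, invokes Kiefer--Wolfowitz with the metric $\int|G_k-\widetilde G_k|e^{-\sigma}$, uses the uniform bound $\|h_0\|_\infty/(\ell\sqrt{2\pi})$, and verifies $\mathbb E_0[-\log q_{\Thetab_0,\Gb_0}]<\infty$ through the same linear bound on $-\log\pi$ and the same Jensen/(A4)/second-moment estimate; your Steps 1 and 4 spell out the compactification and Wald argument that the paper cites rather than proves. The issue you flag in Step 3 is genuine and is left equally open in the paper, which simply asserts that the Kiefer--Wolfowitz identifiability condition follows from Theorem~\ref{thm:ident}: your total-mass argument correctly rules out limits that lose mass to $\sigma=\infty$, but proper limit points of $\mathcal G$ need not satisfy (A3), so an explicit closedness assumption on $\mathcal G$ (or a direct separation condition for its limit points) is needed to make either argument airtight.
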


\begin{proof}
A proof is given in Appendix~C.
\end{proof}

\subsection{Expectation--conditional maximization algorithm} \label{sec:em}

Parameter estimation is carried out via an expectation-conditional maximization (ECM) algorithm \citep{meng1993maximum}. The observed log-likelihood is given by
\begin{align*}
\ell(\Thetab, \Gb)
&=
\sum_{i=1}^n
\log
\left\{
\sum_{k=1}^K
\pi(\boldsymbol{x}_i;\boldsymbol{\alpha}_k)
\int
\frac{1}{\sigma}
\phi\!\left(
\frac{y_i - \boldsymbol{x}_i^\top \boldsymbol{\beta}_k}{\sigma}
\right)
\, dG_k(\sigma)
\right\}.
\end{align*}
Let $Z_{ik}=\mathbb{I}(Z_i=k)$ denote the latent membership indicator.
The complete log-likelihood can be written as
\begin{align*}
\ell_c(\Thetab, \Gb)
&=
\sum_{i=1}^n \sum_{k=1}^K z_{ik}
\Bigg[
\log \pi(\boldsymbol{x}_i;\boldsymbol{\alpha}_k)
+
\log
\int
\frac{1}{\sigma}
\phi\!\left(
\frac{y_i - \boldsymbol{x}_i^\top \boldsymbol{\beta}_k}{\sigma}
\right)
\, dG_k(\sigma)
\Bigg].
\end{align*}

\paragraph{E-step.}
At iteration $t$, the posterior probabilities are computed as
\begin{equation*}
z_{ik}^{(t)}
=
\frac{
\pi(\boldsymbol{x}_i;\boldsymbol{\alpha}_k^{(t)})
\int \frac{1}{\sigma}
\phi\!\left(
\frac{y_i - \boldsymbol{x}_i^\top \boldsymbol{\beta}_k^{(t)}}{\sigma}
\right)
\, dG_k^{(t)}(\sigma)
}{
\sum_{\ell=1}^K
\pi(\boldsymbol{x}_i;\boldsymbol{\alpha}_\ell^{(t)})
\int \frac{1}{\sigma}
\phi\!\left(
\frac{y_i - \boldsymbol{x}_i^\top \boldsymbol{\beta}_\ell^{(t)}}{\sigma}
\right)
\, dG_\ell^{(t)}(\sigma)
}.
\end{equation*}

\paragraph{CM-step 1: Updating $\Gb^{(t+1)}$.}
For each component $k$, the mixing distribution $G_k$ is updated by NPMLE, conditional on the current values of the remaining parameters.
For a candidate scale value $\sigma \in \mathbb{R}^+$, the directional derivative is given by
\begin{align*}
D_{G_k}(\sigma)
&=
\sum_{i=1}^{n} z_{ik}^{(t)}
\frac{
\frac{1}{\sigma}
\phi\!\left(
\frac{y_i - \boldsymbol{x}_i^\top \boldsymbol{\beta}_k^{(t)}}{\sigma}
\right)
}{
\int \frac{1}{\sigma}
\phi\!\left(
\frac{y_i - \boldsymbol{x}_i^\top \boldsymbol{\beta}_k^{(t)}}{\sigma}
\right)
\, dG_k(\sigma)
}
-
\sum_{i=1}^{n} z_{ik}^{(t)}.
\end{align*}
The updated mixing distribution $G_k^{(t+1)}$ is obtained by iteratively modifying the support and weights of $G_k$ using the constraint Newton method for multiple support points (CNM), which exploits its directional-derivative characterization.

\paragraph{CM-step 2: Updating $\boldsymbol{\alpha}^{(t+1)}$.}
The gating parameters $\boldsymbol{\alpha}
= (\boldsymbol{\alpha}_1, \boldsymbol{\alpha}_2, \ldots, \boldsymbol{\alpha}_K)^{\top}$
are updated by solving
\begin{equation}
\label{alpha}
\boldsymbol{\alpha}^{(t+1)}
=
\arg\max_{\boldsymbol{\alpha}}
\sum_{i=1}^n \sum_{k=1}^K
z_{ik}^{(t)}
\left[
\boldsymbol{x}_i^\top \boldsymbol{\alpha}_k
-
\log \sum_{\ell=1}^K \exp(\boldsymbol{x}_i^\top \boldsymbol{\alpha}_\ell)
\right],
\end{equation}
subject to the identifiability constraint $\boldsymbol{\alpha}_K=\mathbf{0}$,
which corresponds to a multinomial logistic regression problem.
This optimization can be efficiently carried out using either the iteratively reweighted least squares (IRLS) algorithm or a quasi-Newton method such as the Broyden--Fletcher--Goldfarb--Shanno (BFGS) algorithm
(\citealp{broyden1970convergence}; \citealp{fletcher1970new};
\citealp{goldfarb1970family}; \citealp{shanno1970conditioning}).

\paragraph{CM-step 3: Updating $\boldsymbol{\beta}^{(t+1)}$.}
For each component $k$, the regression coefficients are updated by maximizing
\begin{equation*}
\sum_{i=1}^n
z_{ik}^{(t)}
\log
\int
\frac{1}{\sigma}
\phi\!\left(
\frac{y_i - \boldsymbol{x}_i^\top \boldsymbol{\beta}_k}{\sigma}
\right)
\, dG_k^{(t+1)}(\sigma).
\end{equation*}
Since $G_k^{(t+1)}$ is discrete with support points
$\{(\hat{\sigma}_{ks},\hat{\omega}_{ks}): s=1,\ldots,S_k\}$,
an inner EM algorithm can be employed.
Specifically, introduce latent indicators $w_{iks}$ indicating whether the $i$th observation is associated with the $s$th scale component within the $k$th mixture.
The conditional expectations are given by
\begin{equation*}
\tau_{iks}^{(t)}
=
\frac{
\hat{\omega}_{ks}
\frac{1}{\hat{\sigma}_{ks}}
\phi\!\left(
\frac{y_i - \boldsymbol{x}_i^\top \boldsymbol{\beta}_k^{(t)}}{\hat{\sigma}_{ks}}
\right)
}{
\sum_{r=1}^{S_k}
\hat{\omega}_{kr}
\frac{1}{\hat{\sigma}_{kr}}
\phi\!\left(
\frac{y_i - \boldsymbol{x}_i^\top \boldsymbol{\beta}_k^{(t)}}{\hat{\sigma}_{kr}}
\right)
}.
\end{equation*}
Then, $\boldsymbol{\beta}_k^{(t+1)}$ is obtained via weighted least squares,
\begin{equation*}
\boldsymbol{\beta}_k^{(t+1)}
=
(\boldsymbol{X}^\top \boldsymbol{V}_k^{(t)} \boldsymbol{X})^{-1}
\boldsymbol{X}^\top \boldsymbol{V}_k^{(t)} \boldsymbol{y},
\end{equation*}
where $\boldsymbol{X}$ is the $n \times (p+1)$ design matrix and
$\boldsymbol{V}_k^{(t)}$ is a diagonal matrix with diagonal elements
$z_{ik}^{(t)} \sum_{s=1}^{S_k} \tau_{iks}^{(t)} / \hat{\sigma}_{ks}^2$.

Proposition~1 establishes the monotone increasing property of the ECM algorithm. In particular, the algorithm produces a non-decreasing sequence of observed log-likelihood values. Iterations are continued until convergence, defined by the condition
\[
\ell(\Thetab^{(t+1)}, \Gb^{(t+1)}) - \ell(\Thetab^{(t)}, \Gb^{(t)}) < \delta,
\]
for a sufficiently small tolerance $\delta > 0$, or until a pre-specified maximum number of iterations is reached.

\begin{proposition}
The ECM algorithm generates a non-decreasing sequence of observed
log-likelihood values, that is,
\[
\ell(\Thetab^{(t+1)},\Gb^{(t+1)})
\;\ge\;
\ell(\Thetab^{(t)},\Gb^{(t)}),
\qquad t=0,1,2,\ldots
\]
\end{proposition}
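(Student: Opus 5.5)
The plan is the standard EM/ECM argument \citep{meng1993maximum}: minorize the observed log-likelihood by the $Q$-function, then show that each CM-step does not decrease $Q$. Write $\Psi=(\Thetab,\Gb)$ and define
\[
Q(\Psi\mid\Psi^{(t)})=\sum_{i=1}^n\sum_{k=1}^K z_{ik}^{(t)}\Bigl[\log\pi(\xb_i;\alphab_k)+\log f_k(y_i\mid\xb_i;\betab_k,G_k)\Bigr],
\]
where $f_k(y\mid\xb;\betab_k,G_k)=\int\sigma^{-1}\phi((y-\xb^\top\betab_k)/\sigma)\,dG_k(\sigma)$. The posterior weights $z_{ik}^{(t)}$ define, for each $i$, a probability vector. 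Jensen's inequality (equivalently, nonnegativity of the Kullback--Leibler divergence) then gives the usual decomposition
\[
\ell(\Psi)-\ell(\Psi^{(t)})\ge Q(\Psi\mid\Psi^{(t)})-Q(\Psi^{(t)}\mid\Psi^{(t)})
\]
for every $\Psi$. So it suffices to show $Q(\Psi^{(t+1)}\mid\Psi^{(t)})\ge Q(\Psi^{(t)}\mid\Psi^{(t)})$.

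$Q$ separates into a gating part depending only on $\alphab$ and expert parts each depending only on $(\betab_k,G_k)$. I would chain the three CM-steps. In CM-step 1, for each $k$, $G_k^{(t+1)}$ is obtained by a CNM run started at $G_k^{(t)}$ with $\betab_k^{(t)}$ fixed. It maximizes (or at least does not decrease) the weighted log-likelihood $\sum_i z_{ik}^{(t)}\log f_k(y_i\mid\xb_i;\betab_k^{(t)},G_k)$. The objective is concave in $G_k$, and CNM with line search is monotone, so
\[
Q(\alphab^{(t)},\betab^{(t)},\Gb^{(t+1)}\mid\Psi^{(t)})\ge Q(\Psi^{(t)}\mid\Psi^{(t)}).
\]
In CM-step 2, the $\alphab$-update maximizes the gating part exactly, since it is a concave multinomial logistic objective, and this part does not involve $\betab$ or $\Gb$.

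CM-step 3 is the one needing care, because $\betab_k^{(t+1)}$ comes from one weighted least squares step of an inner EM rather than an exact maximization. Here I would apply the same Jensen minorization a second time, now to $\log f_k$ with the discrete $G_k^{(t+1)}$, using the inner weights $\tau_{iks}$. One point must be checked: the $\tau_{iks}^{(t)}$ have to be computed at the current $\betab_k^{(t)}$ and the updated $G_k^{(t+1)}$, i.e. the support points $\hat\sigma_{ks}$ and weights $\hat\omega_{ks}$. Then the inner surrogate
\[
\sum_i z_{ik}^{(t)}\sum_s\tau_{iks}^{(t)}\Bigl[\log\hat\omega_{ks}-\log\hat\sigma_{ks}-\frac{(y_i-\xb_i^\top\betab_k)^2}{2\hat\sigma_{ks}^2}\Bigr]
\]
is quadratic in $\betab_k$, and the stated WLS formula is its exact maximizer, assuming $\Xb^\top\Vb_k\Xb$ is invertible. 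The inner EM ascent property then gives that the expert part of $Q$ at $(\betab_k^{(t+1)},G_k^{(t+1)})$ is at least its value at $(\betab_k^{(t)},G_k^{(t+1)})$, and the same holds for any further inner iterations.

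Chaining the three inequalities gives $Q(\Psi^{(t+1)}\mid\Psi^{(t)})\ge Q(\Psi^{(t)}\mid\Psi^{(t)})$, and the minorization inequality yields the claim. I expect the main obstacle to be justifying that the CNM update does not decrease the weighted objective, since it is iterative and not a closed-form maximizer. I would invoke the monotonicity of CNM with its backtracking line search \citep{Wang07}, started from $G_k^{(t)}$. If needed, I would also note that the directional-derivative characterization \citep{Lindsay_1995} makes a full NPMLE solution a global maximizer, which suffices directly.
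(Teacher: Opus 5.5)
Your proposal is correct and follows the same route as the paper. Both use the EM minorization $\ell(\Psi)-\ell(\Psi^{(t)})\ge Q(\Psi\mid\Psi^{(t)})-Q(\Psi^{(t)}\mid\Psi^{(t)})$, show that each CM-step does not decrease its separable block of $Q$, and justify CM-step~1 by NPMLE global optimality (you also cite CNM's monotone line search). You go further than the paper's sketch, which only asserts the CM-step~3 ascent: you verify it through the inner Jensen minorization with $\tau_{iks}^{(t)}$ evaluated at $(\betab_k^{(t)},G_k^{(t+1)})$, so that the WLS update exactly maximizes a surrogate that touches the expert objective at $\betab_k^{(t)}$.
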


\begin{proof}
Let $Q(\Thetab,\Gb \mid \Thetab^{(t)},\Gb^{(t)})$ denote the conditional expectation
of the complete log-likelihood given the observed data.
The standard EM inequality yields
\[
\ell(\Thetab^{(t+1)},\Gb^{(t+1)}) - \ell(\Thetab^{(t)},\Gb^{(t)})
\;\ge\;
Q(\Thetab^{(t+1)},\Gb^{(t+1)} \mid \Thetab^{(t)},\Gb^{(t)})
-
Q(\Thetab^{(t)},\Gb^{(t)} \mid \Thetab^{(t)},\Gb^{(t)}).
\]
Each CM-step is constructed so as to increase (or at least not decrease) the
corresponding block of the $Q$-function.
In particular, in CM-step~1, $\Gb^{(t+1)}$ is taken as the NPMLE, which maximizes
$Q$ over the space of probability measures on $\mathbb{R}^+$.
Hence, the right-hand side is nonnegative, which implies the desired monotonicity.
\end{proof}

\section{Simulation Study} \label{sec:sim}

We conduct a Monte Carlo simulation study to evaluate the finite-sample behavior of the proposed semiparametric MoE model under a range of error distributions. The proposed method is compared with MoE models under normal and $t$-distributed errors. For convenience, we refer to these competing methods as the Normal and $t$ models, respectively, and denote the proposed semiparametric estimator as NPMLE.

For all simulation settings, data are generated from a two-component latent structure with covariates $\mathbf{X} = (1, X_1, X_2)^\top$, where
\[
X_1, X_2 \overset{\text{i.i.d.}}{\sim} N(0, 10^2).
\]
The latent class indicator $Z \in \{1,2\}$ follows a logistic gating mechanism
\[
P(Z=1 \mid \mathbf{X}) = \pi_1(\mathbf{X}), 
\qquad 
\pi_1(\mathbf{X}) 
= \frac{\exp(\mathbf{X}^\top \boldsymbol{\alpha}_1)}
{1+\exp(\mathbf{X}^\top \boldsymbol{\alpha}_1)},
\]
with $\boldsymbol{\alpha}_1=(0,1,1)^\top$. Conditional on $Z$, the response variable is generated as
\[
Y =
\begin{cases}
\mathbf{X}^\top \boldsymbol{\beta}_1 + \varepsilon_1, & Z=1,\\
\mathbf{X}^\top \boldsymbol{\beta}_2 + \varepsilon_2, & Z=2,
\end{cases}
\]
where
\[
\boldsymbol{\beta}_1=(0,-3,3)^\top,
\qquad
\boldsymbol{\beta}_2=(0,3,-3)^\top.
\]

The component errors $(\varepsilon_1,\varepsilon_2)$ are generated under six cases. Throughout this section, $\mathcal{N}(\mu,\sigma^2)$ denotes the normal distribution with mean $\mu$ and variance $\sigma^2$, $t_\nu$ denotes the central Student's $t$ distribution with $\nu$ degrees of freedom, and $t_\nu(\lambda)$ denotes the noncentral Student's $t$ distribution with $\nu$ degrees of freedom and noncentrality parameter $\lambda$. The notation $\mathrm{Laplace}(0,b)$ denotes the Laplace distribution with location parameter 0 and scale parameter $b$.

\begin{itemize}
\item Case $\uppercase\expandafter{\romannumeral1}$: $\varepsilon_1,\varepsilon_2 \sim \mathcal{N}(0,1)$
\item Case $\uppercase\expandafter{\romannumeral2}$: $\varepsilon_1,\varepsilon_2 \sim t_3$
\item Case $\uppercase\expandafter{\romannumeral3}$: $\varepsilon_1,\varepsilon_2 \sim \mathrm{Laplace}(0,1)$
\item Case $\uppercase\expandafter{\romannumeral4}$: $\varepsilon_1,\varepsilon_2 \sim 0.8\,\mathcal{N}(0,1) + 0.2\,\mathcal{N}(0,5^2)$
\item Case $\uppercase\expandafter{\romannumeral5}$: $\varepsilon_1 \sim 0.8\,\mathcal{N}(0,1) + 0.2\,\mathcal{N}(0,5^2)$ and $\varepsilon_2 \sim 0.5\,\mathcal{N}(0,1) + 0.3\,\mathcal{N}(0,5^2) + 0.2\,\mathcal{N}(0,9^2)$
\item Case $\uppercase\expandafter{\romannumeral6}$: $\varepsilon_1 \sim 0.5\,t_3(-1) + 0.5\,t_3(1)$ and 
$\varepsilon_2 \sim 0.5\,\mathcal{N}(0,1) + 0.3\,\mathcal{N}(0,5^2) + 0.2\,\mathcal{N}(0,9^2)$
\end{itemize}
Case $\uppercase\expandafter{\romannumeral1}$ corresponds to a well-specified Gaussian setting without contamination. 
The remaining cases introduce heavy tails and mixture contamination to assess robustness and efficiency of each method under model misspecification and the presence of outliers. 

\subsection{Simulation 1: Estimation accuracy with known $K$}

In the first experiment, the number of latent components is assumed to be known and fixed at $K=2$, thereby isolating estimation accuracy from model selection effects. Independent samples of sizes $n=200$ and $n=400$ are generated under each error case described in Section~\ref{sec:sim}. 
Let $\thetab=(\alphab_1^\top,\betab_1^\top,\betab_2^\top)^\top$ denote the vector of finite-dimensional parameters, and let $\thetab_0$ be the true parameter vector used to generate the data. 

For the $r$th Monte Carlo replication, estimation accuracy is quantified by 
\[
\gammab_r 
=
\bigl\|\hat{\thetab}_r-\thetab_0\bigr\|_2,
\]
which measures the overall deviation of the estimator from the true parameter.
Across $R=200$ replications, performance is summarized by the average estimation error
\[
\bar{\gammab}
=
\frac{1}{R}\sum_{r=1}^{R}\gammab_r,
\]
together with its sample standard deviation
\[
s(\gammab)
=
\sqrt{
\frac{1}{R-1}
\sum_{r=1}^{R}
(\gammab_r-\bar{\gammab})^2
}.
\]
These quantities reflect both the accuracy and the stability of the estimators.

Table~\ref{tab:sim1} reports the mean estimation error and its variability across replications. 
Under the Gaussian setting (Case I), where the parametric normal model is correctly specified, the Normal estimator achieves the smallest estimation error for both sample sizes.  All methods exhibit similar performance in this setting, indicating that the flexibility of the NPMLE approach does not introduce noticeable inefficiency when the error distribution is Gaussian. As the sample size increases from $n=200$ to $n=400$, estimation errors decrease substantially for all methods, consistent with expected asymptotic behavior.

In contrast, under heavy-tailed or contaminated settings (Cases II–VI), clear performance differences arise. The Normal estimator becomes increasingly sensitive to model misspecification, exhibiting inflated estimation errors and greater variability, particularly for smaller sample sizes. The $t$-based estimator improves robustness in moderately heavy-tailed scenarios, but its performance deteriorates when the error structure involves more complex mixture contamination.
Across all cases, the proposed NPMLE-based semiparametric estimator consistently attains the smallest or near-smallest estimation error. Its advantage is most pronounced in non-Gaussian settings (Cases III–VI), where it maintains both lower estimation error and reduced variability compared to the parametric alternatives. This behavior demonstrates the adaptive nature of the semiparametric approach, which effectively balances robustness and efficiency across diverse error structures.

\begin{table}[ht]
\centering
\caption{Mean estimation error $\bar{\gammab}$ and corresponding sample standard deviation $s(\gammab)$ (in parentheses) computed from $R=200$ replications (Boldfaced numbers indicate the smallest value in each criterion).}

\label{tab:sim1}

\begin{tabular}{c | ccc | ccc}
\hline\hline \noalign{\smallskip}
& \multicolumn{3}{c}{$n=200$}
& \multicolumn{3}{c}{$n=400$} \\
\cline{2-4} \cline{5-7}
Case
& Normal & $t$ & NPMLE
& Normal & $t$ & NPMLE \\
\hline

$\uppercase\expandafter{\romannumeral1}$   & $\bm{0.748}$ (0.80) & 0.750 (0.80) & 0.752 (0.80) & 
$\bm{0.449}$ (0.31) & 0.451 (0.31) & 0.452 (0.31) \\
$\uppercase\expandafter{\romannumeral2}$  & 1.348 (5.17) & $\bm{0.784}$ (0.61) & 0.789 (0.61) & 
0.495 (0.23) & $\bm{0.453}$ (0.22) & 0.459 (0.21) \\
$\uppercase\expandafter{\romannumeral3}$ & 1.048 (3.16) & 0.709 (0.78) & $\bm{0.707}$ (0.78) &
0.459 (0.25) & 0.446 (0.26) & $\bm{0.443}$ (0.26) \\
$\uppercase\expandafter{\romannumeral4}$  & 1.627 (7.21) & 1.310 (4.02) & $\bm{1.037}$ (1.40) & 
0.609 (0.31) & 0.518 (0.33) & $\bm{0.512}$ (0.33) \\
$\uppercase\expandafter{\romannumeral5}$  & 1.314 (1.21) & 1.083 (1.33) & $\bm{1.067}$ (1.32) & 
0.823 (0.39) & 0.555 (0.33) & $\bm{0.547}$ (0.32) \\
$\uppercase\expandafter{\romannumeral6}$   & 1.309 (1.31) & 1.068 (1.36) & $\bm{1.037}$ (1.32) &
0.814 (0.57) & 0.565 (0.50) & $\bm{0.564}$ (0.50) \\
\hline
\end{tabular}
\end{table}

\subsection{Simulation 2: Clustering performance with unknown $K$}

In this experiment, the number of latent components is treated as unknown in order to evaluate clustering performance under model-selection uncertainty. Samples of size $n=200$ are generated under the six cases described in Section~\ref{sec:sim}. For each simulated data set, competing models are fitted with $K=1,2,3$, and the final model is selected using either the Bayesian information criterion (BIC; \citealp{schwarz1978estimating}) or the integrated completed likelihood (ICL; \citealp{biernacki2000assessing}).

For each replication, clustering performance is evaluated by comparing estimated cluster assignments with the true latent labels using the adjusted Rand index (ARI) and adjusted mutual information (AMI), both of which are invariant to label permutations. Reported ARI and AMI values are averages over $R=200$ replications.
In addition to clustering accuracy, we record the proportion with which each method and information criterion selects $K=1$, $2$, or $3$. These proportions provide insight into the stability of model selection, while ARI and AMI quantify the quality of cluster recovery.

Table~\ref{tab:sim2} summarizes model-selection frequencies together with average clustering accuracy. Under the Gaussian setting (Case~\uppercase\expandafter{\romannumeral1}), all methods correctly select $K=2$ with high probability, and clustering accuracy is uniformly high. Differences among estimators are minor in this well-specified setting, indicating that all approaches recover the latent structure reliably when the error model is correctly specified.
When the error distribution deviates from normality (Cases~\uppercase\expandafter{\romannumeral2}--\uppercase\expandafter{\romannumeral6}), the Normal model shows reduced stability in selecting the correct number of components, with a noticeable tendency to overestimate $K$, particularly under contamination. This instability is reflected in lower ARI and AMI values relative to the competing methods.

\begin{table}[p]
\centering
\caption{Proportions of selected cluster numbers over $R=200$ replications along with ARI and AMI (Boldface indicates the best performance in each setting).}
\label{tab:sim2}

\begin{tabular}{c c c c c c c c}

\hline\hline \noalign{\smallskip}

Case & Method & Criterion 
& $K=1$ & $K=2$ & $K=3$ 
& ARI & AMI \\
\hline

\multirow{6}{*}{$\uppercase\expandafter{\romannumeral1}$}
& \multirow{2}{*}{Normal}
& BIC & 4.5 $\%$& 93.5 $\%$ & 2.0 $\%$ & 0.9426 & 0.9372 \\
& & ICL & 4.5 $\%$ & 93.5 $\%$ & 2.0 $\%$ & 0.9426 & 0.9372 \\

& \multirow{2}{*}{$t$}
& BIC & 0.5 $\%$ & 99.5 $\%$ & 0.0 $\%$ & 0.9814 & 0.9779 \\
& & ICL & 0.5 $\%$ & 99.5 $\%$ & 0.0 $\%$ & 0.9814 & 0.9779 \\

& \multirow{2}{*}{NPMLE}
& BIC & 0.0 $\%$ & $\bm{100.0}$ $\%$ & 0.0 $\%$ & $\bm{0.9919}$ & $\bm{0.9886}$ \\
& & ICL & 0.0 $\%$ & $\bm{100.0}$ $\%$ & 0.0 $\%$ & $\bm{0.9919}$ & $\bm{0.9886}$ \\

\hline

\multirow{6}{*}{$\uppercase\expandafter{\romannumeral2}$}
& \multirow{2}{*}{Normal}
& BIC & 2.5 $\%$ & 90.0 $\%$ & 7.5 $\%$ & 0.9553 & 0.9418 \\
& & ICL & 2.5 $\%$ & 94.0 $\%$ & 3.5 $\%$ & 0.9555 & 0.9421 \\

& \multirow{2}{*}{$t$}
& BIC & 0.0 $\%$ & 99.0 $\%$ & 1.0 $\%$ & 0.9934 & 0.9869 \\
& & ICL & 0.0 $\%$ & 99.0 $\%$ & 1.0 $\%$ & $\bm{0.9934}$ & $\bm{0.9870}$ \\

& \multirow{2}{*}{NPMLE}
& BIC & 0.0 $\%$ & $\bm{100.0}$ $\%$ & 0.0 $\%$ & 0.9895 & 0.9839 \\
& & ICL & 0.0 $\%$ & $\bm{100.0}$ $\%$ & 0.0 $\%$ & 0.9896 & 0.9840 \\

\hline

\multirow{6}{*}{$\uppercase\expandafter{\romannumeral3}$}
& \multirow{2}{*}{Normal}
& BIC & 2.0 $\%$ & 95.0 $\%$ & 3.0 $\%$ & 0.9703 & 0.9639 \\
& & ICL & 2.0 $\%$ & 95.0 $\%$ & 3.0 $\%$ & 0.9705 & 0.9641 \\

& \multirow{2}{*}{$t$}
& BIC & 0.0 $\%$ & $\bm{100.0}$ $\%$ & 0.0 $\%$ & 0.9915 & 0.9879 \\
& & ICL & 0.0 $\%$ & $\bm{100.0}$ $\%$ & 0.0 $\%$ & $\bm{0.9915}$ & $\bm{0.9880}$ \\

& \multirow{2}{*}{NPMLE}
& BIC & 0.0 $\%$ & 99.5 $\%$ & 0.5 $\%$ & 0.9901 & 0.9858 \\
& & ICL & 0.0 $\%$ & $\bm{100.0}$ $\%$ & 0.0 $\%$ & 0.9901 & 0.9859 \\

\hline

\multirow{6}{*}{$\uppercase\expandafter{\romannumeral4}$}
& \multirow{2}{*}{Normal}
& BIC & 4.5 $\%$ & 62.0 $\%$ & 33.5 $\%$ & 0.9018 & 0.8553 \\
& & ICL & 4.5 $\%$ & 72.5 $\%$ & 23.0 $\%$ & 0.9019 & 0.8550 \\

& \multirow{2}{*}{$t$}
& BIC & 1.5 $\%$ & 95.0 $\%$ & 3.5 $\%$ & 0.9661 & 0.9538 \\
& & ICL & 1.5 $\%$ & 95.0 $\%$ & 3.5 $\%$ & 0.9663 & 0.9541 \\

& \multirow{2}{*}{NPMLE}
& BIC & 1.0 $\%$ & 97.5 $\%$ & 1.5 $\%$ & 0.9746 & 0.9641 \\
& & ICL & 1.0 $\%$ & $\bm{98.5}$ $\%$ & 0.5 $\%$ & $\bm{0.9747}$ & $\bm{0.9643}$ \\

\hline

\multirow{6}{*}{$\uppercase\expandafter{\romannumeral5}$}
& \multirow{2}{*}{Normal}
& BIC & 0.0 $\%$ & 54.0 $\%$ & 46.0 $\%$ & 0.9043 & 0.8445 \\
& & ICL & 0.0 $\%$ & 71.0 $\%$ & 29.0 $\%$ & 0.9047 & 0.8451 \\

& \multirow{2}{*}{$t$}
& BIC & 1.0 $\%$ & 96.5 $\%$ & 2.5 $\%$ & 0.9729 & 0.9580 \\
& & ICL & 1.0 $\%$ & 96.5 $\%$ & 2.5 $\%$ & 0.9730 & 0.9582 \\

& \multirow{2}{*}{NPMLE}
& BIC & 0.0 $\%$ & 97.0 $\%$ & 3.0 $\%$ & 0.9801 & 0.9653 \\
& & ICL & 0.0 $\%$ & $\bm{99.5}$ $\%$ & 0.5 $\%$ & $\bm{0.9802}$ & $\bm{0.9654}$ \\

\hline

\multirow{6}{*}{$\uppercase\expandafter{\romannumeral6}$}
& \multirow{2}{*}{Normal}
& BIC & 3.0 $\%$ & 66.5 $\%$ & 30.5 $\%$ & 0.9002 & 0.8684 \\
& & ICL & 3.0 $\%$ & 83.5 $\%$ & 13.5 $\%$ & 0.9007 & 0.8690 \\

& \multirow{2}{*}{$t$}
& BIC & 11.5 $\%$ & 85.5 $\%$ & 3.0 $\%$ & 0.8757 & 0.8695 \\
& & ICL & 11.5 $\%$ & 85.5 $\%$ & 3.0 $\%$ & 0.8763 & 0.8701 \\

& \multirow{2}{*}{NPMLE}
& BIC & 0.5 $\%$ & 94.5 $\%$ & 5.0 $\%$ & 0.9812 & 0.9721 \\
& & ICL & 0.5 $\%$ & $\bm{99.0}$ $\%$ & 0.5 $\%$ & $\bm{0.9813}$ & $\bm{0.9722}$ \\

\hline

\end{tabular}
\end{table}

In contrast, both the $t$-based and NPMLE estimators exhibit substantially higher model-selection stability than the Normal estimator, consistently selecting $K=2$ with high frequency across most non-Gaussian cases. Their clustering accuracy remains strong overall, with ARI and AMI values close to one in many settings. The performance of the $t$ and NPMLE estimators is broadly comparable in Cases II--V, although the NPMLE approach often achieves slightly higher selection stability and clustering accuracy in contaminated cases. The advantage of NPMLE becomes particularly pronounced in Case VI, where the error distribution combines a noncentral $t$ mixture and normal-mixture contamination. In this more complex setting, the NPMLE estimator selects the true two-component structure more frequently and yields substantially higher ARI and AMI values than the $t$-based estimator.

These results indicate that robust error modeling improves recovery of the latent cluster structure when the number of components is unknown. In particular, the NPMLE approach provides additional protection when the error distribution departs from standard heavy-tailed parametric forms, supporting reliable model selection and accurate clustering across diverse error structures.

\subsection{Simulation 3: Predictive performance with unknown $K$}

This experiment evaluates predictive performance when the number of mixture components is unknown. For each Monte Carlo replication, the training sample is generated under one of the error cases described in Section~\ref{sec:sim}, while an independent test sample of size $n_{\text{test}}=2000$ is generated from the Gaussian setting (Case~\uppercase\expandafter{\romannumeral1}). This design allows the training data to contain heavy tails or contamination, whereas prediction accuracy is assessed under a clean test distribution. Models are fitted using the training data with $K$ selected via BIC or ICL, and predictions are evaluated on the test sample.
For the $r$th replication, predictive accuracy is measured using the root mean squared error (RMSE) and the mean absolute error (MAE),
\[
\text{RMSE}_{\text{test}}^{(r)}
=
\sqrt{
\frac{1}{n_{\text{test}}}
\sum_{i=1}^{n_{\text{test}}}
\bigl(
y_i^{(r)}-\hat{y}_i^{(r)}
\bigr)^2
},
\qquad
\text{MAE}_{\text{test}}^{(r)}
=
\frac{1}{n_{\text{test}}}
\sum_{i=1}^{n_{\text{test}}}
\left|
y_i^{(r)}-\hat{y}_i^{(r)}
\right|.
\]
Reported values are averages over $R=200$ replications.

Table~\ref{tab:sim3} compares predictive accuracy across competing methods under each case and model-selection criterion. In the Gaussian setting (Case~\uppercase\expandafter{\romannumeral1}), all estimators achieve comparable prediction errors, with the NPMLE method yielding the smallest RMSE and MAE. This indicates that the semiparametric approach remains competitive even when the training distribution is well specified.
When the training data are generated from heavy-tailed or contaminated scenarios (Cases~\uppercase\expandafter{\romannumeral2}--\uppercase\expandafter{\romannumeral6}), clearer differences emerge. The Normal estimator generally exhibits larger prediction errors, reflecting its sensitivity to misspecified training distributions. The $t$-based estimator improves predictive robustness under moderately heavy-tailed settings, particularly in Cases~\uppercase\expandafter{\romannumeral2} and \uppercase\expandafter{\romannumeral3}. However, its performance deteriorates when the error distribution is more complex than a standard heavy-tailed parametric form. This is most evident in Case~\uppercase\expandafter{\romannumeral6}, where the $t$-based estimator yields substantially larger RMSE and MAE than both the Normal and NPMLE estimators.

\begin{table}[ht]
\caption{Average RMSE and MAE are computed from $R=200$ replications
(Boldface indicates the best performance in each setting).}
\label{tab:sim3}
\centering

\begin{tabular}{c c cc cc cc}
\hline\hline \noalign{\smallskip}

\multirow{2}{*}{Case} &
\multirow{2}{*}{Criterion} &
\multicolumn{2}{c}{Normal} &
\multicolumn{2}{c}{$t$} &
\multicolumn{2}{c}{NPMLE} \\

& & BIC & ICL & BIC & ICL & BIC & ICL \\

\hline

\multirow{2}{*}{$\uppercase\expandafter{\romannumeral1}$}
& RMSE & 16.5744 & 16.5744 & 15.6032 & 15.6032 & $\bm{15.3556}$ & $\bm{15.3556}$ \\
& MAE  & 5.8885  & 5.8885  & 4.8444  & 4.8444  & $\bm{4.5538}$  & $\bm{4.5538}$ \\

\hline

\multirow{2}{*}{$\uppercase\expandafter{\romannumeral2}$}
& RMSE & 15.9982 & 15.9981 & $\bm{15.1549}$ & $\bm{15.1549}$ & 15.3243 & 15.3243 \\
& MAE  & 5.4214  & 5.4212  & $\bm{4.4607}$  & $\bm{4.4607}$  & 4.5884  & 4.5884 \\

\hline

\multirow{2}{*}{$\uppercase\expandafter{\romannumeral3}$}
& RMSE & 15.7991 & 15.7991 & $\bm{15.3167}$ & $\bm{15.3167}$ & 15.3343 & 15.3350 \\
& MAE  & 5.0883  & 5.0883  & $\bm{4.6277}$  & $\bm{4.6277}$  & 4.6354  & 4.6355 \\

\hline

\multirow{2}{*}{$\uppercase\expandafter{\romannumeral4}$}
& RMSE & 16.5176 & 16.5100 & 15.7230 & 15.7230 & $\bm{15.5905}$ & 15.5924 \\
& MAE  & 6.0756  & 6.0704  & 5.0359  & 5.0359  & $\bm{4.8777}$  & 4.8778 \\

\hline

\multirow{2}{*}{$\uppercase\expandafter{\romannumeral5}$}
& RMSE & ${15.2736}$ & $\bm{15.2326}$ & 15.4663 & 15.4663 & 15.3391 & 15.3316 \\
& MAE  & 4.8240  & 4.7890  & 4.7821  & 4.7821  & ${4.5714}$ & $\bm{4.5667}$ \\

\hline

\multirow{2}{*}{$\uppercase\expandafter{\romannumeral6}$}
& RMSE & 16.0105 & 16.0131 & 18.3474 & 18.3474 & ${15.2701}$ & $\bm{15.2691}$ \\
& MAE  & 5.4116  & 5.4196  & 7.8875  & 7.8875  & ${4.6027}$  & $\bm{4.5988}$ \\

\hline
\end{tabular}
\end{table}

The NPMLE estimator attains the smallest or near-smallest prediction errors across the non-Gaussian settings and shows a clear advantage in the more complex contamination scenarios. This advantage is especially evident in Case~\uppercase\expandafter{\romannumeral6}, where the proposed semiparametric error specification provides additional protection against error distributions involving both heavy tails and mixture contamination. These results demonstrate that the NPMLE approach maintains stable predictive performance under distributional misspecification without sacrificing efficiency in the Gaussian case.

\section{Real Data Analysis} \label{sec:real}

\subsection{Tone perception}

The tone perception data originate from the psychoacoustic experiment reported by \citet{cohen1980inharmonic}. The sample consists of 150 observations obtained from repeated tuning trials performed by a single musician. Each observation records two variables: \textit{stretchratio}, which controls the inharmonic spacing of overtones, and \textit{tuned}, representing the musician’s adjustment of the perceived octave.
The experiment compares pure fundamental tones with electronically synthesized overtones. A stretch ratio of 2.0 corresponds to a harmonic structure consistent with conventional definite-pitched instruments, whereas a tuned value of 2.0 indicates uniform octave tuning across stretch ratios. In each trial, the musician adjusted an auxiliary tone to match the octave above the fundamental, producing paired measurements that reflect perceptual tuning behavior under varying harmonic conditions.
This dataset has frequently served as a benchmark example in studies of robust mixture regression models and related clustering methods; see, for example, \citet{bai2012robust}, \citet{song2014robust}, \citet{chamroukhi2016robust}, and \citet{oh2023merging}.

We analyze tone perception data to assess the robustness of competing MoE models to mild contamination. In addition to the original data, we construct an outlier-contaminated version by artificially introducing three outlying observations at 
\[
(\text{stretchratio}, \text{tuned})
=
(1.5,4),\ (2.5,5),\ (2,3.5).
\]
This setup allows us to assess how sensitive each method is to localized departures from the assumed model.
Table~\ref{tab:tone_selection} summarizes the BIC and ICL values under both data settings. For the original data, all methods favor a two-component structure. After adding outliers, the Normal model selects a three-component solution, suggesting that the contamination induces spurious cluster formation. In contrast, both the $t$ and NPMLE models continue to prefer two components, indicating resistance to overfitting caused by outliers.

\begin{table}[ht]
\centering
\caption{BIC and ICL values for the tone perception data under original and outlier-added settings. Boldface indicates the minimum value within each setting.}
\label{tab:tone_selection}

\begin{tabular}{c cc cc cc}
\hline\hline
\multicolumn{7}{c}{\textbf{Original data}} \\
\hline
\multirow{2}{*}{$K$} &
\multicolumn{2}{c}{Normal} &
\multicolumn{2}{c}{$t$} &
\multicolumn{2}{c}{NPMLE} \\
& BIC & ICL & BIC & ICL & BIC & ICL \\
\hline

1 & -3.73 & -3.73 & -6.14  & -6.14 & -36.19 & -36.19 \\
2 & $\bm{-245.60}$ & $\bm{-215.61}$ & $\bm{-403.76}$ & $\bm{-373.61}$ & $\bm{-345.83}$ & $\bm{-303.32}$ \\
3 & -245.19 & -197.91 & -379.67 & -339.13 & -301.37 & -264.88 \\

\hline
\multicolumn{7}{c}{\textbf{Outlier-added data}} \\
\hline

\multirow{2}{*}{$K$} &
\multicolumn{2}{c}{Normal} &
\multicolumn{2}{c}{$t$} &
\multicolumn{2}{c}{NPMLE} \\
& BIC & ICL & BIC & ICL & BIC & ICL \\
\hline

1 & 154.23 & 154.23 & 115.01 & 115.01 & 42.39 & 42.39 \\
2 & -133.13 & -115.03 & $\bm{-362.71}$ & $\bm{-331.88}$ & $\bm{-295.44}$ & $\bm{-251.17}$ \\
3 & $\bm{-354.41}$ & $\bm{-330.71}$ & -358.15 & -328.99 & -266.64 & -230.60 \\

\hline
\end{tabular}
\end{table}

Table~\ref{tab:tone_param} summarizes how each estimator responds to the introduction of outliers. Since the MoE model assigns observations to experts probabilistically, the reported mixing proportion for the $k$th expert is computed as the sample average of the fitted posterior membership probabilities,
$\hat{\pi}_k = \frac{1}{n}\sum_{i=1}^n \hat z_{ik}$.
For the Normal model, the contaminated data lead to a qualitative change in model structure: an additional expert component is selected, and the mixing proportions are redistributed. This structural change is accompanied by noticeable shifts in the regression coefficients, indicating that the Gaussian specification attempts to accommodate the outliers by forming a separate cluster. 
In contrast, both the $t$-based and NPMLE estimators retain the original two-component structure after contamination. Their estimated mixing proportions remain comparable to those obtained from the original data, suggesting that the outliers are absorbed without inducing spurious cluster formation. The difference becomes more pronounced at the level of regression parameters. While the $t$ model exhibits minor adjustments in coefficient estimates, the NPMLE estimator shows almost no change in the estimated $\betab = (\beta_0, \beta_1)^{\top}$ parameters. This near-invariance indicates strong robustness of the semiparametric estimator with respect to the expert-specific regression structure.

\begin{table}[ht]
\centering
\caption{Estimated mixing proportions and regression coefficients under original and outlier-added tone perception data.}
\label{tab:tone_param}

\begin{tabular}{c ccc cc cc}
\hline\hline

\multicolumn{8}{c}{\textbf{Original data}} \\
\hline
\multirow{2}{*}{Parameter} &
\multicolumn{3}{c}{Normal} &
\multicolumn{2}{c}{$t$} &
\multicolumn{2}{c}{NPMLE} \\

& Expert 1 & Expert 2 & --
& Expert 1 & Expert 2
& Expert 1 & Expert 2 \\
\hline

$\pi$
& 0.717 & 0.282 & --
& 0.586 & 0.413
& 0.554 & 0.445 \\

$\beta_0$
& 1.913 & -0.029 & --
& 1.960 & 0.002
& 1.921 & 0.009 \\

$\beta_1$
& 0.043 & 0.995 & --
& 0.025 & 0.999
& 0.039 & 0.996 \\

\hline
\multicolumn{8}{c}{\textbf{Outlier-added data}} \\
\hline
\multirow{2}{*}{Parameter} &
\multicolumn{3}{c}{Normal} &
\multicolumn{2}{c}{$t$} &
\multicolumn{2}{c}{NPMLE} \\

& Expert 1 & Expert 2 & Expert 3
& Expert 1 & Expert 2
& Expert 1 & Expert 2 \\
\hline

$\pi$
& 0.548 & 0.341 & 0.109
& 0.602 & 0.397
& 0.571 & 0.428 \\

$\beta_0$
& 1.924 & 0.003 & 1.169
& 1.963 & 0.002
& 1.921 & 0.009 \\

$\beta_1$
& 0.038 & 0.998 & 0.611
& 0.024 & 0.999
& 0.039 & 0.996 \\

\hline
\end{tabular}
\end{table}

\begin{figure}[http]
\centering

\subfigure[Normal (Original)]{
\includegraphics[width=0.425\linewidth]{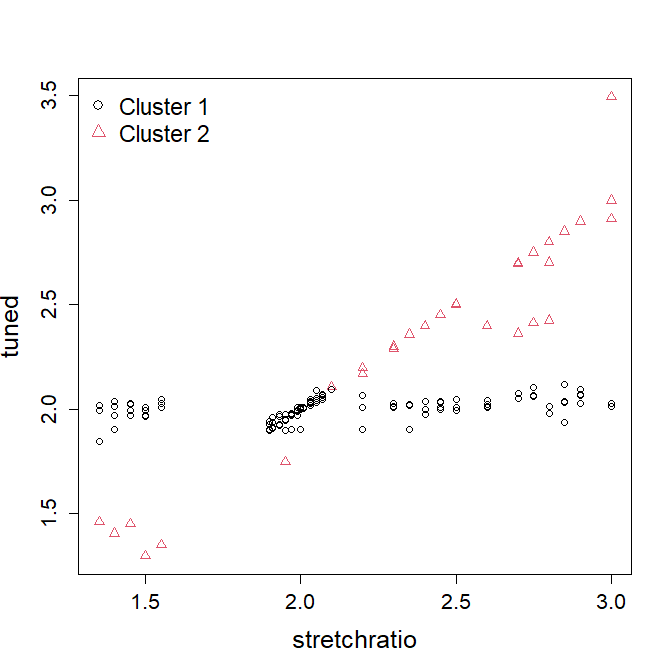}
}\hfill
\subfigure[Normal (Outlier-added)]{
\includegraphics[width=0.425\linewidth]{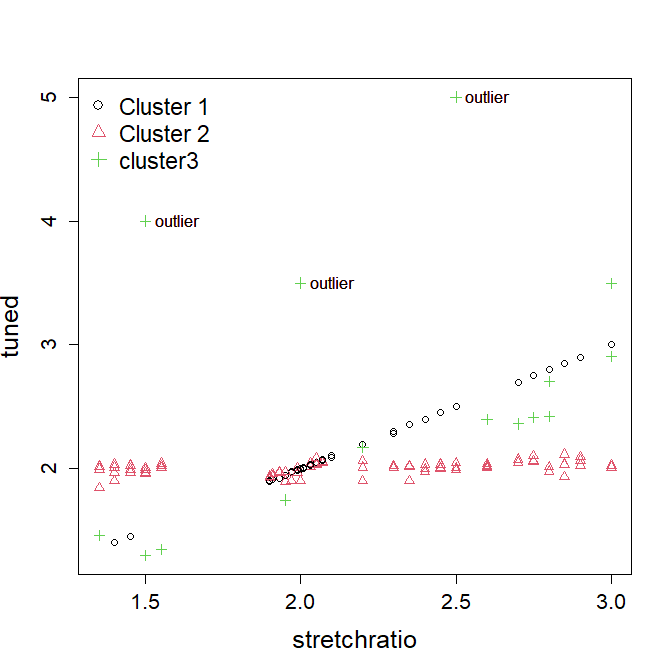}
}

\subfigure[$t$ (Original)]{
\includegraphics[width=0.425\linewidth]{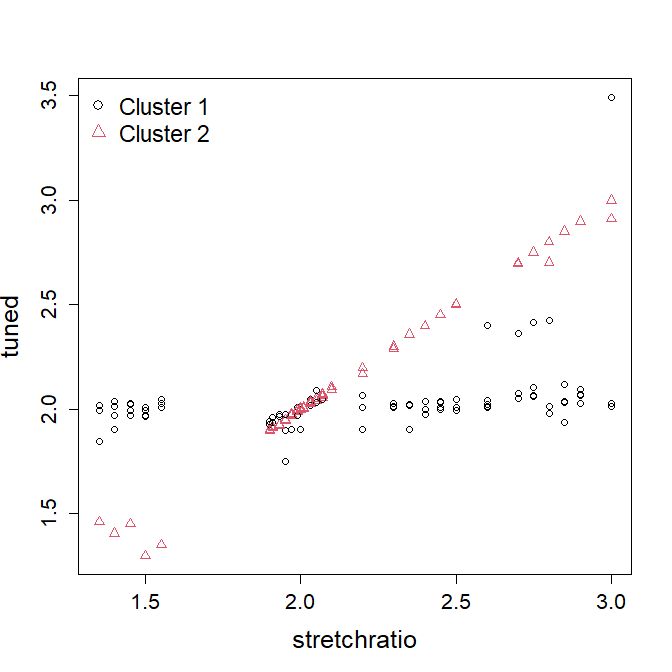}
}\hfill
\subfigure[$t$ (Outlier-added)]{
\includegraphics[width=0.425\linewidth]{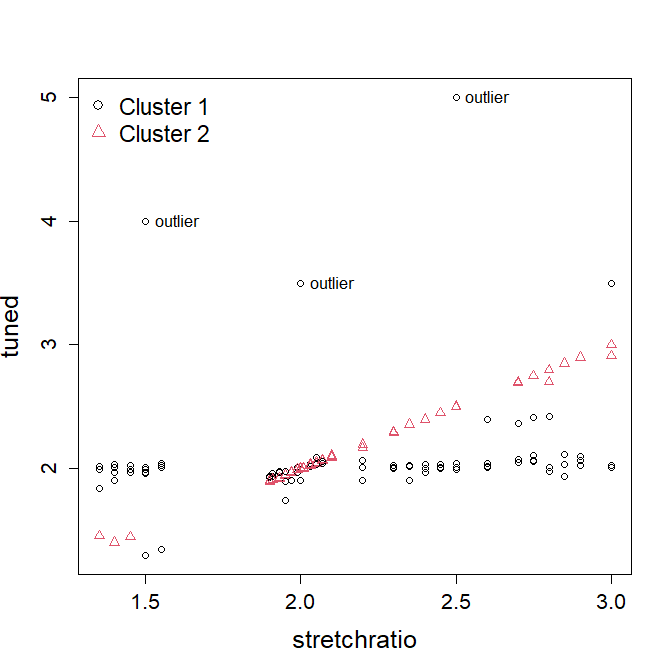}
}

\subfigure[NPMLE (Original)]{
\includegraphics[width=0.425\linewidth]{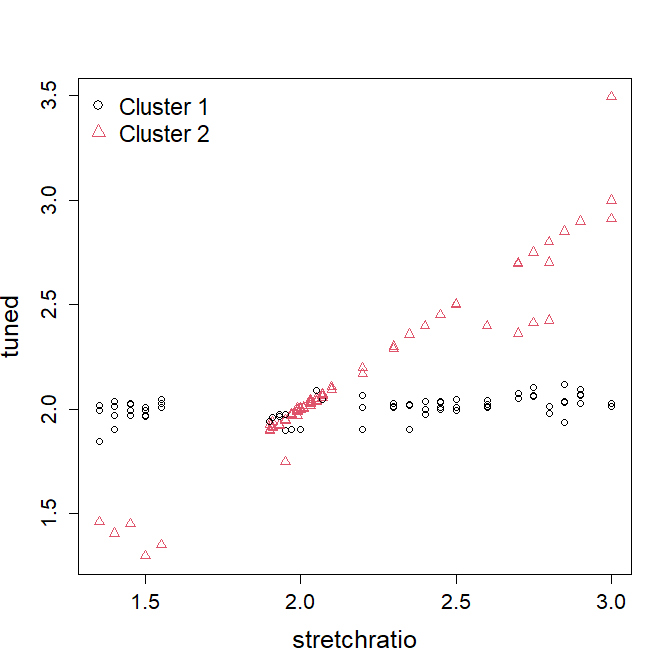}
}\hfill
\subfigure[NPMLE (Outlier-added)]{
\includegraphics[width=0.425\linewidth]{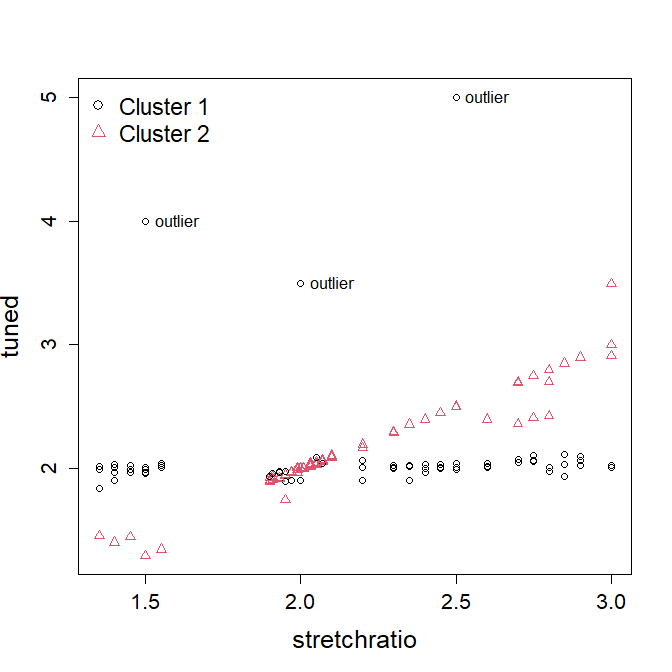}
}

\caption{Clustering comparison between original (left column) and outlier-added data (right column) for Tone perception data}
\label{fig:tone_compare}
\end{figure}

The clustering assignments displayed in Figure~\ref{fig:tone_compare} reinforce these findings. Under the Normal model, the outliers substantially modify the partition of the data, producing visibly different cluster boundaries. By comparison, the $t$ and NPMLE partitions are nearly indistinguishable between the original and contaminated settings. This stability is reflected quantitatively in the proportion of label changes between the two data sets,
\[
74.6\% \text{ (Normal)}, \quad
2.0\% \text{ ($t$)}, \quad
0.0\% \text{ (NPMLE)},
\]
which confirms that the Normal-based estimator is highly sensitive to localized contamination, whereas the $t$ and especially the NPMLE estimator preserve the latent cluster structure. Taken together, the numerical and graphical evidence demonstrate that the proposed semiparametric approach maintains both model complexity and parameter estimates in the presence of outliers.

\subsection{Korea National Health and Nutrition Examination Survey}

As a second real-data illustration, we use data from the Korea National Health and Nutrition Examination Survey (KNHANES), a nationwide, cross-sectional health and nutrition survey that provides representative information on health status, health-related behaviors, dietary intake, and socioeconomic characteristics of the Korean population. KNHANES has been used to study associations between dietary intake patterns and metabolic health outcomes based on 24-hour dietary recall data, demonstrating its relevance for analyzing diet-related health variables \citep{jeong2023higher}. Specifically, we analyze data from the 2021 cycle of KNHANES. After excluding observations with missing values in the response or covariates, the final analytic sample consisted of 21,607 observations.

In this analysis, the response variable is the Healthy Eating Index (HEI), which summarizes overall dietary quality, with larger values indicating better dietary quality. Prior to model fitting, HEI and all continuous covariates were standardized to have mean zero and unit variance, while categorical covariates were represented by dummy variables. The covariates include demographic, socioeconomic, anthropometric, biochemical, and nutritional variables. Specifically, we consider gender, age, household income, education level, body mass index (BMI), fasting glucose, total cholesterol, total energy intake, water intake, protein intake, fat intake, and sodium intake. Household income and education level are treated as categorical variables, with the lowest income group and elementary school or lower education level used as reference categories, respectively. The nutritional variables characterize dietary intake profiles, whereas BMI, fasting glucose, and total cholesterol reflect metabolic health status. The combination of continuous and categorical predictors, together with the large sample size, provides a useful setting for evaluating whether the proposed MoE model can identify heterogeneous regression relationships between HEI and individual-level demographic, socioeconomic, dietary, and metabolic characteristics.

Table~\ref{tab:knhanes_selection} summarizes model selection results based on BIC and ICL for $K=1,2,3$. The Normal and $t$ models attain their smallest BIC values at $K=3$, whereas the NPMLE model attains its smallest BIC value at $K=2$. In contrast, ICL consistently selects the two-component model across all three approaches. This discrepancy is consistent with the known behavior of mixture model selection criteria: BIC tends to favor improved likelihood fit, whereas ICL additionally penalizes classification uncertainty and therefore favors models with clearer component separation. Since the purpose of this real-data analysis is to obtain an interpretable latent structure in the HEI regression relationship, and because ICL consistently supports $K=2$, we proceed with the two-expert model.

\begin{table}[ht]
\centering
\caption{BIC and ICL values for the 2021 KNHANES data (Boldface indicates the minimum value within each criterion).}
\label{tab:knhanes_selection}

\begin{tabular}{c cc cc cc}
\hline\hline
\multirow{2}{*}{$K$} &
\multicolumn{2}{c}{Normal} &
\multicolumn{2}{c}{$t$} &
\multicolumn{2}{c}{NPMLE} \\
& BIC & ICL & BIC & ICL & BIC & ICL \\
\hline

1 & 5458.54 & 5458.54 & 5521.45 & 5521.45 & 5489.07 & 5489.07 \\
2 & 5151.32 & $\bm{5288.55}$ & 5154.79 & $\bm{5322.91}$ & $\bm{5202.86}$ & $\bm{5372.25}$ \\
3 & $\bm{5142.95}$ & 5449.40 & $\bm{5131.31}$ & 5427.24 & 5209.96 & 5613.87 \\

\hline
\end{tabular}
\end{table}

Estimated mixing proportions and regression coefficients for $K=2$ are reported in Table~\ref{tab:knhanes_param}. 
Across the three approaches, the estimated mixing proportions indicate that the sample is divided into two moderately sized latent subgroups. The regression coefficients suggest that the association between HEI and the covariates differs across experts. In particular, gender, age, household income, education level, and dietary intake variables show distinct coefficient patterns between the two experts, indicating that the associations between dietary quality and the covariates are heterogeneous across individuals. The coefficients for total energy intake, water intake, protein intake, fat intake, and sodium intake further suggest that nutritional profiles are associated with HEI differently across the latent subgroups.

\begin{table}[ht]
\centering
\caption{Estimated mixing proportions and regression coefficients for the 2021 KNHANES data.}
\label{tab:knhanes_param}

\begin{tabular}{c cc cc cc}
\hline\hline
Parameter
& \multicolumn{2}{c}{Normal}
& \multicolumn{2}{c}{$t$}
& \multicolumn{2}{c}{NPMLE} \\

& Expert 1 & Expert 2 
& Expert 1 & Expert 2 
& Expert 1 & Expert 2  \\
\hline

$\pi$
& 0.431 & 0.568  
& 0.361 & 0.638  
& 0.462 & 0.537  \\

Intercept
& 0.617 & 0.142 
& 0.841 & 0.159 
& 0.442 & 0.244 \\

Gender (Female)
& 0.453 & 0.136
& 0.479 & 0.133
& 0.476 & 0.106 \\

Age
& 0.389 & 0.410 
& 0.381 & 0.415 
& 0.401 & 0.393 \\

Income (Lower-middle)
& 0.114 & -0.052
& 0.050 & 0.001
& 0.109 & -0.056 \\

Income (Upper-middle)
& 0.043 & 0.026
& 0.065 & 0.036
& 0.053 & 0.035 \\

Income (High)
& 0.120 & -0.042
& 0.106 & -0.006
& 0.151 & -0.071 \\

Education (Middle)
& -0.054 & -0.060
& 0.057 & -0.144
& 0.002 & -0.132 \\

Education (High)
& -0.138 & 0.075
& -0.153 & 0.033
& -0.100 & 0.041 \\

Education (≥ College)
& -0.170 & 0.076
& -0.172 & 0.020
& -0.133 & 0.036 \\

BMI
& -0.054 & -0.032
& -0.046 & -0.043
& -0.044 & -0.040 \\

Glucose
& -0.048 & -0.074
& -0.052 & -0.029
& -0.037 & -0.082 \\

Cholesterol
& -0.072 & -0.032
& -0.091 & -0.020
& -0.072 & -0.035 \\

Energy
& -0.179 & 0.091
& -0.263 & 0.122
& -0.187 & 0.070 \\

Water
& -0.001 & 0.187
& -0.017 & 0.185
& 0.041 & 0.170 \\

Protein
& 1.422 & 0.222
& 1.442 & 0.235
& 1.489 & 0.192 \\

Fat
& 0.416 & -0.489
& 0.734 & -0.509
& 0.193 & -0.477 \\

Sodium
& -0.243 & -0.175
& -0.248 & -0.171
& -0.234 & -0.163 \\

\hline
\end{tabular}
\end{table}

Figure~\ref{fig:knhanes_continuous} displays the clustering results for selected continuous covariates that show clear differences between the two estimated clusters. The boxplots indicate that Cluster 2 tends to consist of younger individuals than Cluster 1 while exhibiting higher overall intake levels for total energy, water, protein, fat, and sodium. By comparison, Cluster 1 is characterized by relatively older individuals with lower intake levels across the selected nutritional variables. The separation is especially visible for energy, protein, fat, and sodium intake, where the medians and upper tails are shifted upward in Cluster 2. These patterns suggest that the estimated two-cluster structure is distinguished by a combined age--nutrition profile, providing an interpretable summary of latent heterogeneity in the KNHANES data.

\begin{figure}[p]
\centering

\subfigure[Age]{
\includegraphics[width=0.375\linewidth]{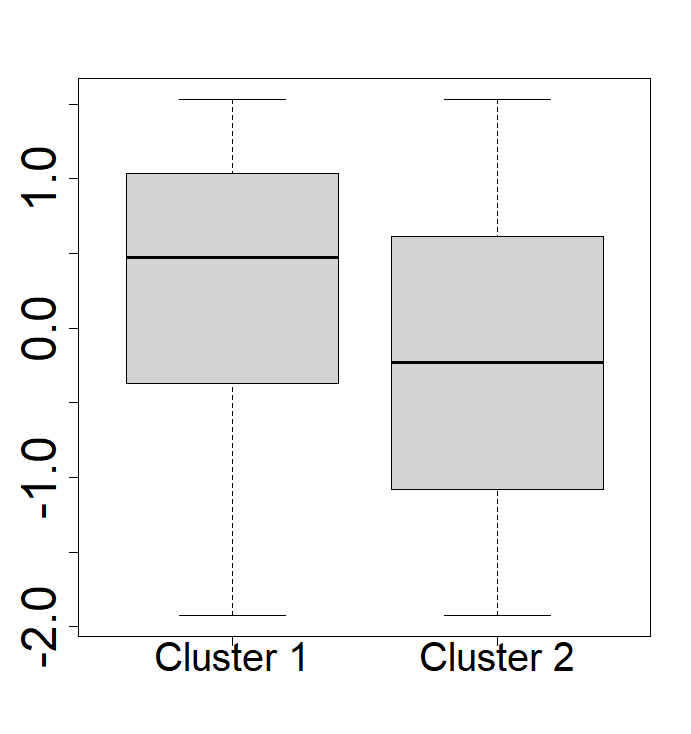}
}\hfill
\subfigure[Energy]{
\includegraphics[width=0.375\linewidth]{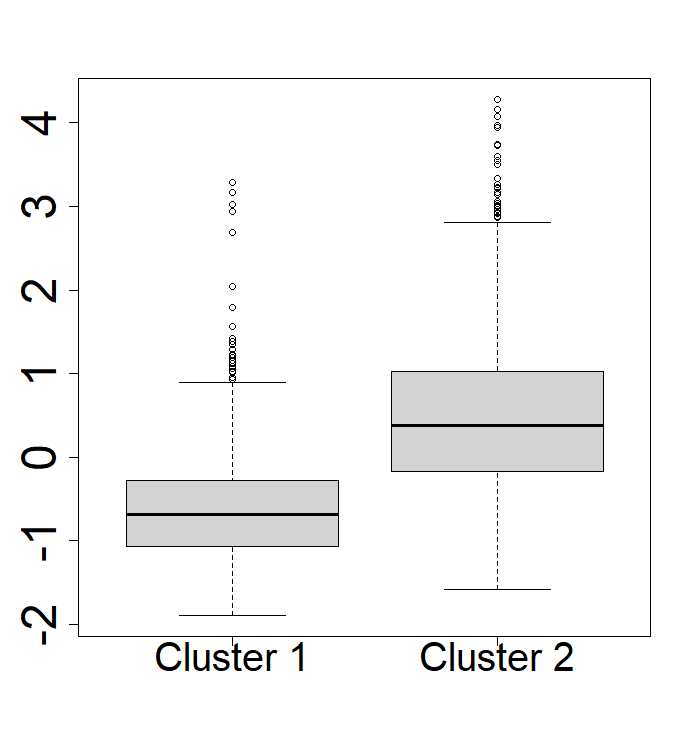}
}\hfill
\subfigure[Water]{
\includegraphics[width=0.375\linewidth]{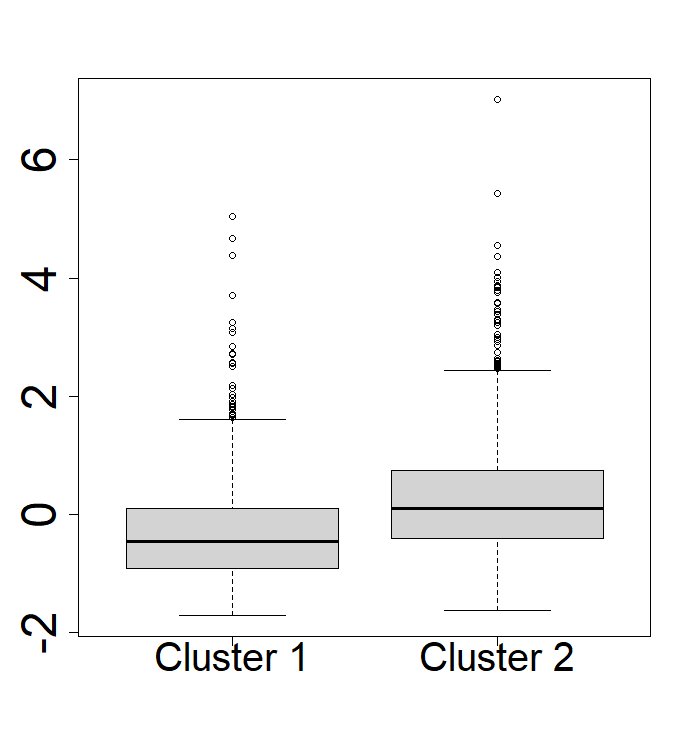}
}\hfill
\subfigure[Protein]{
\includegraphics[width=0.375\linewidth]{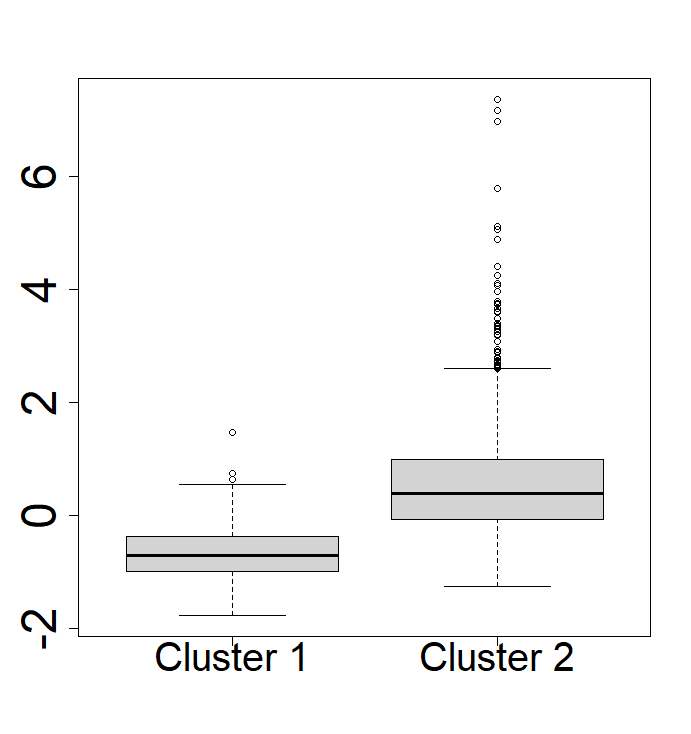}
}\hfill
\subfigure[Fat]{
\includegraphics[width=0.375\linewidth]{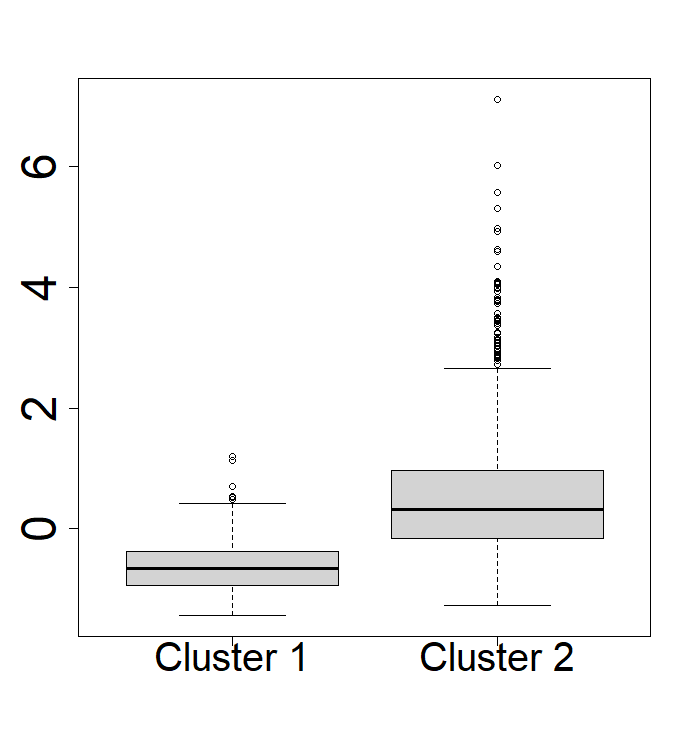}
}\hfill
\subfigure[Sodium]{
\includegraphics[width=0.375\linewidth]{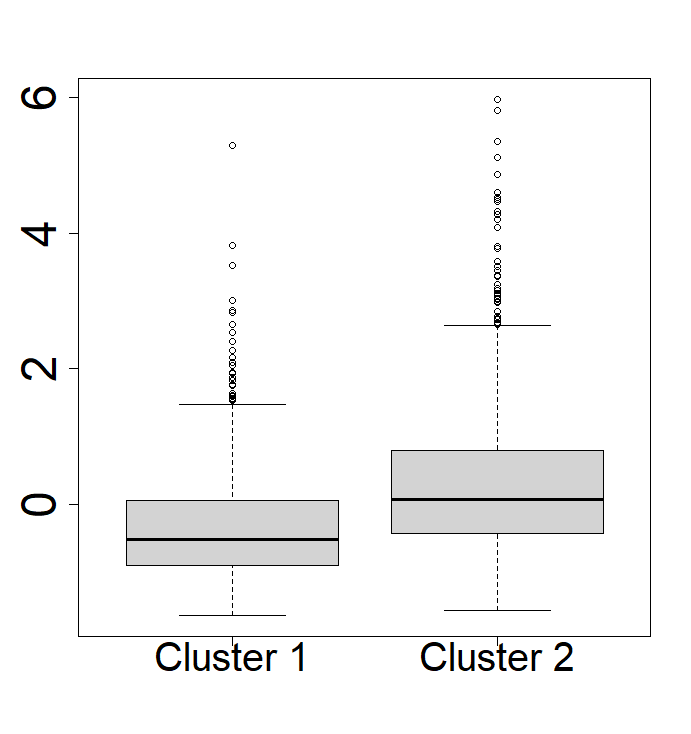}
}
\caption{
NPMLE clustering results for the 2021 KNHANES data based on continuous covariates: age, total energy intake, water intake, protein intake, fat intake, and sodium intake.
}
\label{fig:knhanes_continuous}
\end{figure}

\begin{figure}[p]
\centering

\subfigure[Gender]{
\includegraphics[width=0.8\linewidth]{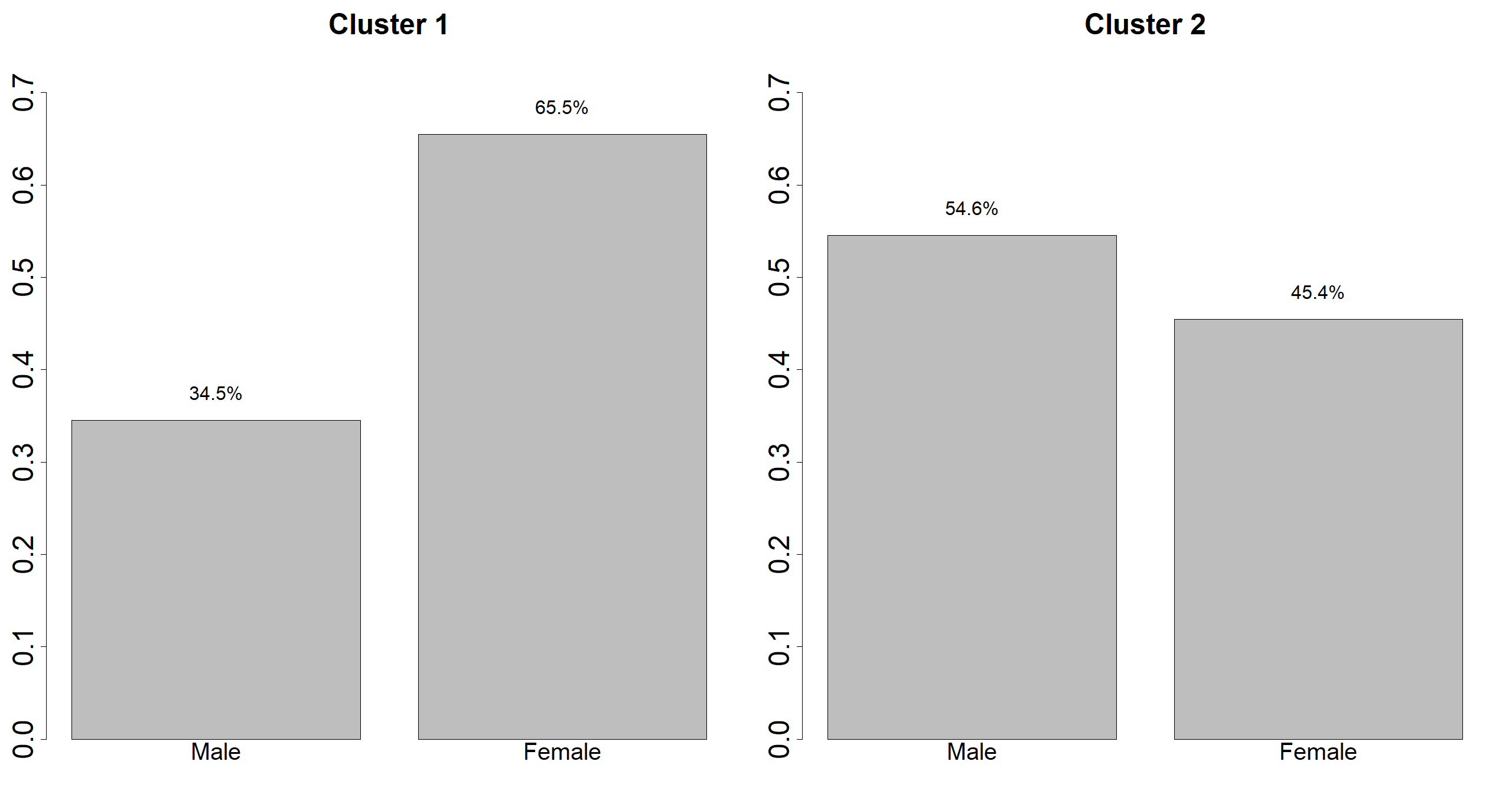}
}\hfill
\subfigure[Income]{
\includegraphics[width=0.8\linewidth]{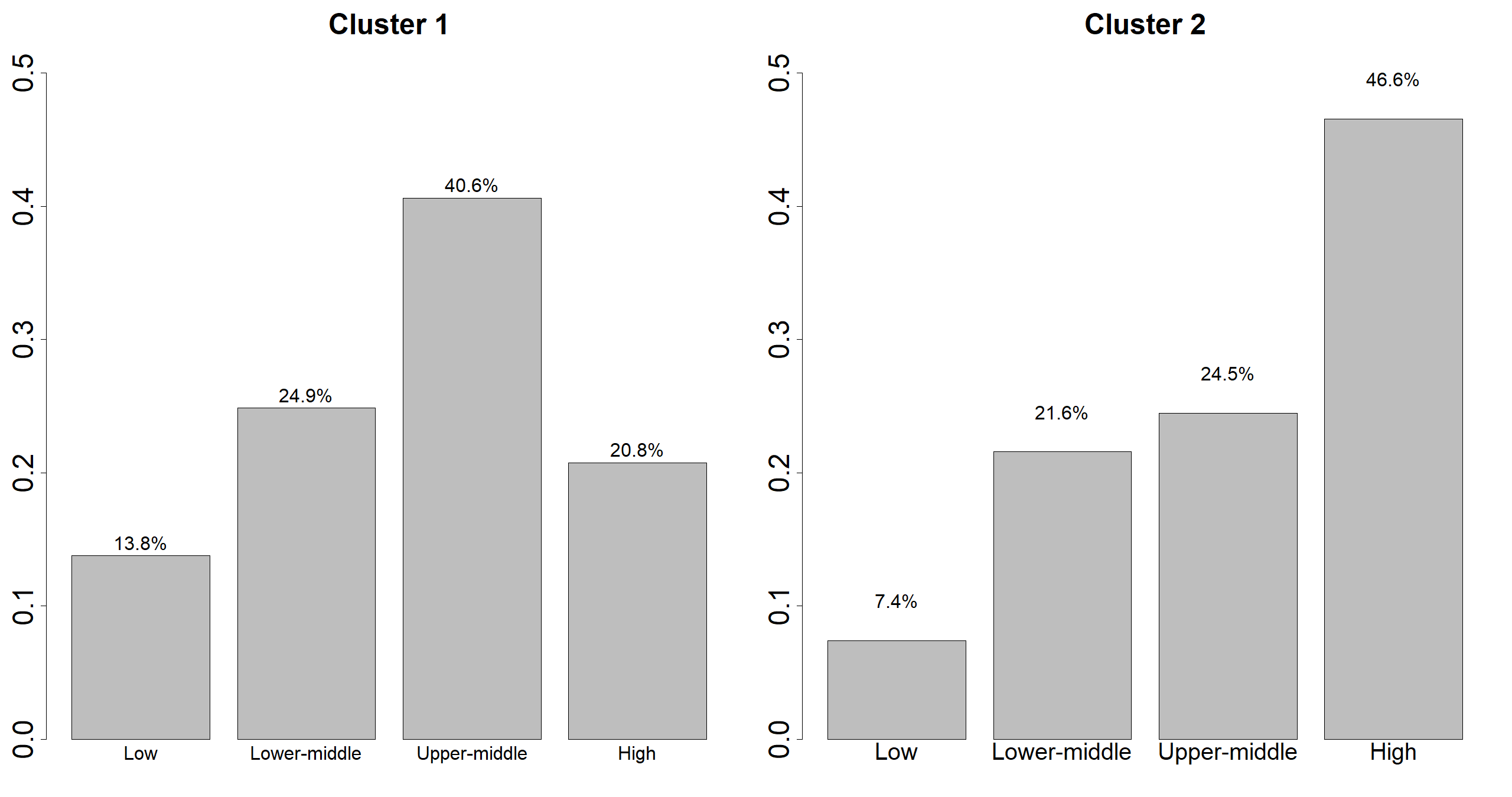}
}\hfill
\subfigure[Education]{
\includegraphics[width=0.8\linewidth]{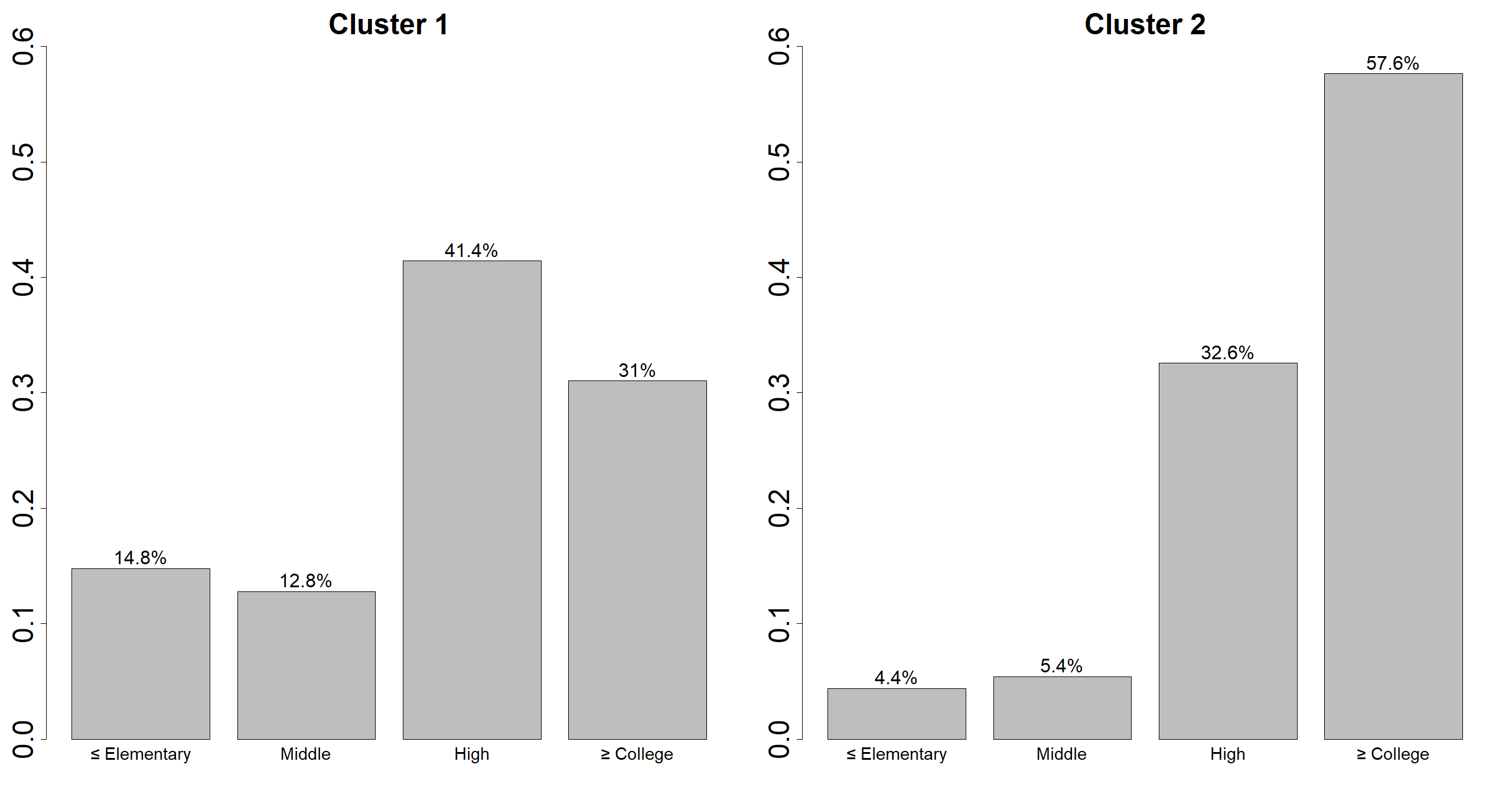}
}
\caption{
NPMLE clustering results for the 2021 KNHANES data based on categorical covariates: gender, household income, and education level.
}
\label{fig:knhanes_categorical}
\end{figure}

Figure~\ref{fig:knhanes_categorical} summarizes the clustering results for categorical covariates, including gender, household income, and education level. The gender distribution shows that Cluster 1 has a higher proportion of females, whereas Cluster 2 has a higher proportion of males. Clear differences are also observed in household income and education level. Cluster 1 is more concentrated in the lower-middle and upper-middle income categories, while Cluster 2 has a substantially larger proportion of individuals in the high-income category. A similar pattern appears for education. Cluster 1 includes relatively larger proportions of individuals with elementary, middle, or high school education, whereas Cluster 2 is dominated by individuals with college-level education or higher. These patterns suggest that the two estimated clusters differ not only in dietary intake profiles but also in demographic and socioeconomic composition, with Cluster 2 representing a younger, higher-intake subgroup with higher socioeconomic status.

Predictive performance is evaluated using 10-fold cross-validation, with RMSE and MAE summarized in Figure~\ref{fig:knhanes_cv}. The boxplots compare the Normal, $t$, and NPMLE models in terms of out-of-sample prediction accuracy for HEI. The NPMLE model yields the lowest median RMSE and MAE among the three methods, indicating improved central predictive accuracy relative to the parametric alternatives. It also exhibits smaller dispersion across folds for both criteria, suggesting more stable predictive performance under possible non-Gaussian error behavior or influential observations in the KNHANES data.

\begin{figure}[h]
\centering

\subfigure[RMSE]{
\includegraphics[width=0.435\linewidth]{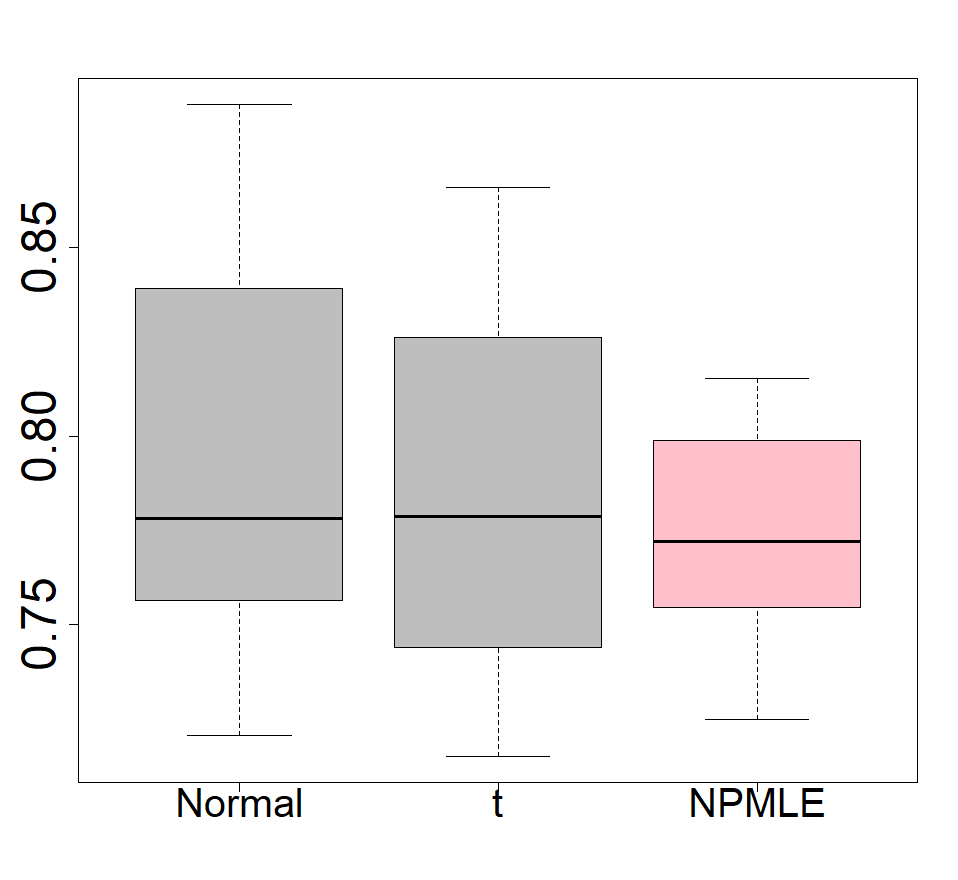}
}\hfill
\subfigure[MAE]{
\includegraphics[width=0.435\linewidth]{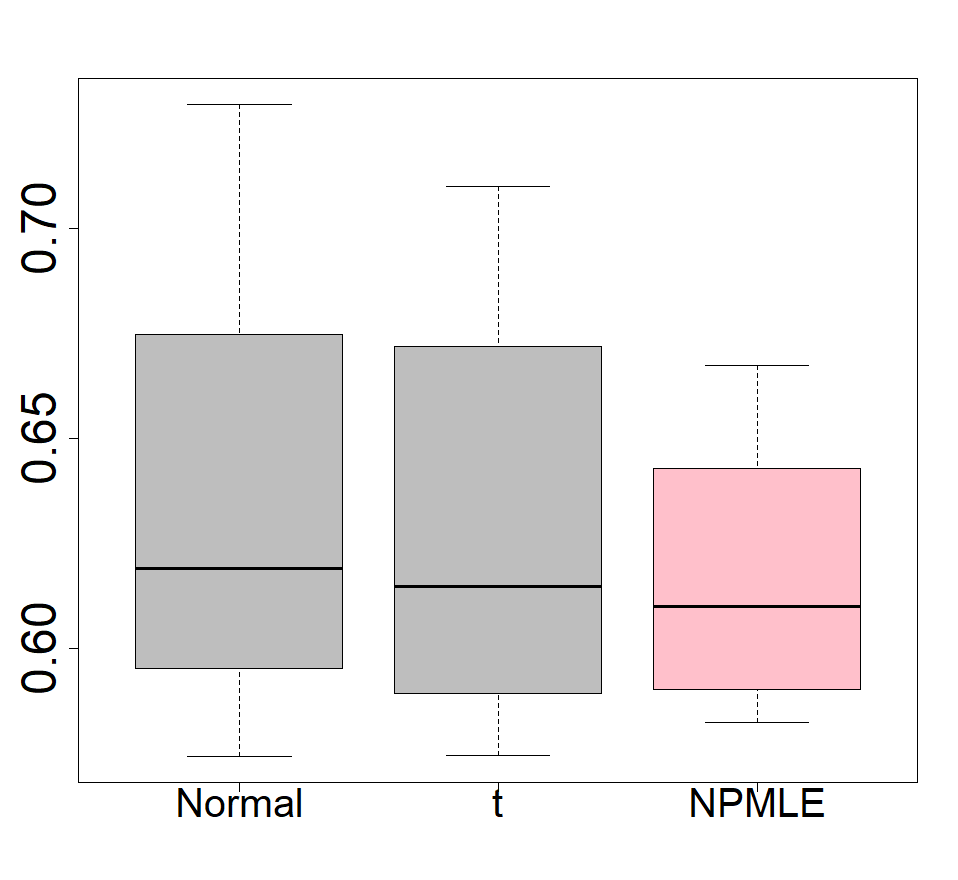}
}

\caption{
Predictive performance on the 2021 KNHANES data under 10-fold cross-validation. 
}
\label{fig:knhanes_cv}
\end{figure}

\section{Discussion} \label{sec:disc}

We propose a semiparametric MoE method in which expert error distributions are modeled using nonparametric Gaussian scale mixtures estimated via NPMLE. This formulation relaxes parametric assumptions within the Gaussian scale mixture class while preserving the structural advantages of the classical MoE architecture. Theoretical analysis establishes identifiability, consistency, and monotonicity of the estimation algorithm, providing a rigorous foundation for semiparametric expert modeling. 
From an applied perspective, the proposed approach is particularly well-suited to regression and clustering tasks in which the underlying error structure is unknown or exhibits heavy-tailed behavior. By allowing each expert component to adapt its error distribution to the data, the model reduces sensitivity to misspecification while retaining interpretability through the gating–expert decomposition.

Several limitations warrant further investigation. The increased flexibility of the semiparametric formulation introduces additional computational complexity relative to parametric MoE models, particularly when estimating nonparametric mixing distributions. Efficient implementation and scalable optimization, therefore, remain important practical considerations. 
Future work may explore extensions to multivariate responses, structured covariates, high-dimensional settings, and censored/missing data \citep{lachos2017finite,alencar2022finite}. The integration of regularization techniques and alternative nonparametric representations could further improve scalability and interpretability. A broader theoretical investigation of asymptotic efficiency and convergence behavior is another important direction for future research.






\bibliographystyle{apa} 

\bibliography{references}

\section*{Appendix A: Proof of Lemma \ref{lem:gsm-ident}}

Assume that, for all $x\in\mathbb{R}$,
\[
\sum_{k=1}^{K} \pi_k 
\int \frac{1}{\sigma}
\phi\!\left(\frac{x-\mu_k}{\sigma}\right)\, dG_k(\sigma)
=
\sum_{j=1}^{\tilde K} \tilde\pi_j 
\int \frac{1}{\sigma}
\phi\!\left(\frac{x-\tilde\mu_j}{\sigma}\right)\, d\tilde G_j(\sigma).
\]
Let $\Psi(t)$ and $\tilde\Psi(t)$ denote the corresponding characteristic functions. Taking characteristic functions on both sides gives, for all \(t\in\mathbb R\),
\[
\sum_{k=1}^{K}\pi_k e^{i\mu_k t}\Psi_{G_k}(t)
=
\sum_{j=1}^{\tilde K}\tilde\pi_j e^{i\tilde\mu_j t}
\Psi_{\tilde G_j}(t),
\]
where 
\[
\Psi_G(t)=\int \exp\!\left(-\frac12\sigma^2t^2\right)dG(\sigma).
\]

Let \(H_1,\ldots,H_B\) be the distinct scale mixing distributions appearing
among
\[
G_1,\ldots,G_K,\tilde G_1,\ldots,\tilde G_{\tilde K}.
\]
By Assumption~\textup{(A3)}, these distributions can be relabelled so that
\[
\frac{\Psi_{H_{b'}}(t)}{\Psi_{H_b}(t)}\to0,
\qquad |t|\to\infty,\qquad 1\le b<b'\le B .
\]

For each \(b=1,\ldots,B\), define
\[
I_b=\{k:G_k=H_b\},
\qquad
\tilde I_b=\{j:\tilde G_j=H_b\}.
\]
Then the characteristic-function identity can be written as
\[
\sum_{b=1}^{B}\Psi_{H_b}(t)P_b(t)=0,
\]
where
\[
P_b(t)
=
\sum_{k\in I_b}\pi_k e^{i\mu_k t}
-
\sum_{j\in \tilde I_b}\tilde\pi_j e^{i\tilde\mu_j t}.
\]
Each \(P_b\) is bounded because
\[
|P_b(t)|
\le
\sum_{k\in I_b}\pi_k
+
\sum_{j\in \tilde I_b}\tilde\pi_j
\le 2.
\]

Dividing the preceding identity by \(\Psi_{H_1}(t)\) gives
\[
P_1(t)
+
\sum_{b=2}^{B}
\frac{\Psi_{H_b}(t)}{\Psi_{H_1}(t)}P_b(t)
=0.
\]
Hence,
\[
|P_1(t)|
\le
\sum_{b=2}^{B}
\left|
\frac{\Psi_{H_b}(t)}{\Psi_{H_1}(t)}
\right|
|P_b(t)|
\to0,
\qquad |t|\to\infty,
\]
by the asymptotic ordering of \(H_1,\ldots,H_B\) and the boundedness of
\(P_b\).

Write
\[
P_1(t)=\sum_{\nu\in\mathcal M_1}a_{1\nu}e^{i\nu t},
\]
where \(\mathcal M_1\) is the finite set of distinct locations appearing in
the first block. Since \(P_1(t)\to0\) as \(|t|\to\infty\) and \(P_1\) is
bounded,
\[
0
=
\lim_{T\to\infty}
\frac{1}{2T}\int_{-T}^{T}|P_1(t)|^2dt
=
\sum_{\nu\in\mathcal M_1}|a_{1\nu}|^2.
\]
Thus \(a_{1\nu}=0\) for all \(\nu\in\mathcal M_1\), and hence
\(P_1\equiv0\).
Removing the first block and repeating the same argument successively with
the remaining leading blocks yields
\[
P_b\equiv0,\qquad b=1,\ldots,B.
\]

Therefore, for each \(b\),
\[
\sum_{k\in I_b}\pi_k e^{i\mu_k t}
=
\sum_{j\in \tilde I_b}\tilde\pi_j e^{i\tilde\mu_j t},
\qquad t\in\mathbb R.
\]
Equivalently, the finite positive measures
\[
Q_b=\sum_{k\in I_b}\pi_k\delta_{\mu_k},
\qquad
\tilde Q_b=\sum_{j\in \tilde I_b}\tilde\pi_j\delta_{\tilde\mu_j}
\]
have the same characteristic function. By uniqueness of characteristic
functions for finite measures,
\[
Q_b=\tilde Q_b.
\]
Since the locations \(\mu_1,\ldots,\mu_K\) are pairwise distinct and
\(\tilde\mu_1,\ldots,\tilde\mu_{\tilde K}\) are pairwise distinct, equality of
these discrete measures implies that, within each block \(b\), the atoms and
their corresponding masses coincide. Hence there exists a bijection
\(\tau_b:I_b\to\tilde I_b\) such that, for every \(k\in I_b\),
\[
\mu_k=\tilde\mu_{\tau_b(k)},
\qquad
\pi_k=\tilde\pi_{\tau_b(k)}.
\]
Moreover, by the definition of \(I_b\) and \(\tilde I_b\),
\[
G_k=H_b=\tilde G_{\tau_b(k)}.
\]

Combining the bijections \(\tau_b\), \(b=1,\ldots,B\), gives a permutation
\(\tau\) of \(\{1,\ldots,K\}\). In particular,
\[
\tilde K=K
\]
and, for all \(k=1,\ldots,K\),
\[
\pi_k=\tilde\pi_{\tau(k)},\qquad
\mu_k=\tilde\mu_{\tau(k)},\qquad
G_k=\tilde G_{\tau(k)}.
\]
\section*{Appendix B: Proof of Theorem \ref{thm:ident}}

For notational simplicity, we present the proof for the case in
which no intercept terms are included in the expert and gating
networks. The same argument applies when intercepts are included.
Indeed, in that case, write
\[
m_k(\ub)
=
\beta_{k0}+\ub^\top\betab_k,
\qquad
\eta_k(\ub)
=
\alpha_{k0}+\ub^\top\alphab_k.
\]
For two distinct augmented regression coefficient vectors
\[
(\beta_{k0},\betab_k^\top)^\top
\quad\text{and}\quad
(\beta_{k'0},\betab_{k'}^\top)^\top,
\]
the set
\[
\left\{
\ub\in\mathbb R^p:
(\beta_{k0}-\beta_{k'0})
+
\ub^\top(\betab_k-\betab_{k'})
=0
\right\}
\]
is either an affine hyperplane or the empty set, and hence has
Lebesgue measure zero. Moreover, equality of two affine functions
on a nonempty open set implies equality of both their intercept
and slope coefficients. The same argument applies to the affine
log-odds functions in the gating network. Therefore, the proof
with intercept terms proceeds identically.

Assume that there exists an alternative representation with \(\tilde K\)
components such that, for all
\((y,\ub)\in\mathbb R\times\mathbb R^p\),
\begin{align}
\label{7}
p(y\mid \ub)
&=\sum_{k=1}^K \pi(\ub;\alphab_k)
\int \frac{1}{\sigma}
\phi\!\left(\frac{y-\ub^\top\betab_k}{\sigma}\right)\,dG_k(\sigma) \notag\\
&=\sum_{j=1}^{\tilde K} \pi(\ub;\tilde\alphab_j)
\int \frac{1}{\sigma}
\phi\!\left(\frac{y-\ub^\top\tilde\betab_j}{\sigma}\right)\,d\tilde G_j(\sigma),
\end{align}
where \(\tilde\betab_1,\ldots,\tilde\betab_{\tilde K}\) are pairwise
distinct. For convenience, define
\[
f_{\mathrm{GSM}}(u;G)
=
\int \frac{1}{\sigma}
\phi\!\left(\frac{u}{\sigma}\right)\,dG(\sigma),
\qquad
m_k(\ub)=\ub^\top\betab_k,
\qquad
\tilde m_j(\ub)=\ub^\top\tilde\betab_j.
\]

For any \(k,k' \in \{1,\ldots,K\}\) with \(k\neq k'\), since
\(\betab_k\neq \betab_{k'}\) by Assumption~\textup{(A2)}, the set
\[
\{\ub\in\mathbb R^p:\ \ub^\top(\betab_k-\betab_{k'})=0\}
\]
is a hyperplane and hence has Lebesgue measure zero. Likewise, for any
\(j,j' \in \{1,\ldots,\tilde K\}\) with \(j\neq j'\), the set
\[
\{\ub\in\mathbb R^p:\ \ub^\top(\tilde\betab_j-\tilde\betab_{j'})=0\}
\]
is a hyperplane and hence has Lebesgue measure zero.

By Assumption~\textup{(A1)}, the support of \(\ub\) contains a nonempty open set
\(\bm U\subset\mathbb R^p\). Hence, we can choose
\[
\ub^\star \in \bm U\setminus
\Bigg(
\bigcup_{\substack{k\neq k'\\ 1\le k,k'\le K}}
\{\ub:\ub^\top(\betab_k-\betab_{k'})=0\}
\;\cup\!
\bigcup_{\substack{j\neq j'\\ 1\le j,j'\le \tilde K}}
\{\ub:\ub^\top(\tilde\betab_j-\tilde\betab_{j'})=0\}
\Bigg).
\]
For this choice, the collections \(\{m_k(\ub^\star)\}_{k=1}^K\) and
\(\{\tilde m_j(\ub^\star)\}_{j=1}^{\tilde K}\) are pairwise distinct within
each family.

Fixing \(\ub=\ub^\star\) in \eqref{7}, we obtain, for all \(y\in\mathbb R\),
\[
\sum_{k=1}^K \pi_k^\star\, f_{\mathrm{GSM}}(y-m_k^\star;G_k)
=
\sum_{j=1}^{\tilde K} \tilde\pi_j^\star\,
f_{\mathrm{GSM}}(y-\tilde m_j^\star;\tilde G_j),
\]
where
\[
\pi_k^\star=\pi(\ub^\star;\alphab_k),
\qquad
\tilde\pi_j^\star=\pi(\ub^\star;\tilde\alphab_j),
\qquad
m_k^\star=m_k(\ub^\star),
\qquad
\tilde m_j^\star=\tilde m_j(\ub^\star).
\]
Under the multinomial-logit gating network, \(\pi_k^\star>0\) and
\(\sum_k \pi_k^\star=1\), and similarly for \(\tilde\pi_j^\star\).
Moreover, by the choice of \(\ub^\star\), the component locations are pairwise
distinct within each representation. Since Assumption~\textup{(A3)} holds,
Lemma~\ref{lem:gsm-ident} applies. Hence, \(\tilde K=K\) and there exists a
permutation \(\tau_{\ub^\star}\) of \(\{1,\ldots,K\}\) such that
\begin{equation}
\label{8}
m_k(\ub^\star)
=
\tilde m_{\tau_{\ub^\star}(k)}(\ub^\star),
\qquad
\pi(\ub^\star;\alphab_k)
=
\pi(\ub^\star;\tilde\alphab_{\tau_{\ub^\star}(k)}),
\qquad
G_k=\tilde G_{\tau_{\ub^\star}(k)}.
\end{equation}

Let \(\tau=\tau_{\ub^\star}\). Since
\(m_1(\ub^\star),\ldots,m_K(\ub^\star)\) are pairwise distinct, the matched
locations
\[
m_k(\ub^\star)
=
\tilde m_{\tau(k)}(\ub^\star),
\qquad k=1,\ldots,K,
\]
are separated. Hence, by continuity, there exists a nonempty open neighborhood
\(V\subset\bm U\) of \(\ub^\star\) such that, for every \(\ub\in V\), the
locations \(m_1(\ub),\ldots,m_K(\ub)\) and
\(\tilde m_1(\ub),\ldots,\tilde m_K(\ub)\) remain pairwise distinct, and the
matching induced by Lemma~\ref{lem:gsm-ident} is still given by the same
permutation \(\tau\). Therefore, for all \(\ub\in V\) and \(k=1,\ldots,K\),
\[
m_k(\ub)=\tilde m_{\tau(k)}(\ub),
\qquad
\pi(\ub;\alphab_k)=\pi(\ub;\tilde\alphab_{\tau(k)}),
\qquad
G_k=\tilde G_{\tau(k)}.
\]

Consequently,
\[
\ub^\top\betab_k
=
\ub^\top\tilde\betab_{\tau(k)},
\qquad
\ub\in V,\quad k=1,\ldots,K.
\]
Since both sides are linear in \(\ub\) and agree on the nonempty open set \(V\),
it follows that
\[
\betab_k=\tilde\betab_{\tau(k)},
\qquad k=1,\ldots,K.
\]

Finally, since the mixing proportions agree on \(V\),
\[
\pi(\ub;\alphab_k)
=
\pi(\ub;\tilde\alphab_{\tau(k)}),
\qquad
\ub\in V,\quad k=1,\ldots,K.
\]
In particular,
\[
\pi(\ub;\alphab_K)
=
\pi(\ub;\tilde\alphab_{\tau(K)}),
\qquad
\ub\in V.
\]
Under the multinomial-logit gating network with the baseline constraint
\(\alphab_K=0\), we have
\[
\log\frac{\pi(\ub;\alphab_k)}
{\pi(\ub;\alphab_K)}
=
\ub^\top\alphab_k.
\]
For the alternative representation,
\[
\log
\frac{\pi(\ub;\tilde\alphab_{\tau(k)})}
{\pi(\ub;\tilde\alphab_{\tau(K)})}
=
\ub^\top
\left(
\tilde\alphab_{\tau(k)}
-
\tilde\alphab_{\tau(K)}
\right).
\]
Therefore,
\[
\ub^\top\alphab_k
=
\ub^\top
\left(
\tilde\alphab_{\tau(k)}
-
\tilde\alphab_{\tau(K)}
\right),
\qquad
\ub\in V.
\]
Since \(V\) contains a nonempty open set, it follows that
\[
\alphab_k
=
\tilde\alphab_{\tau(k)}
-
\tilde\alphab_{\tau(K)},
\qquad
k=1,\ldots,K.
\]

\section*{Appendix C: Proof of Theorem \ref{thm:consistency}}

Under model~\eqref{eq:spmoe},
\[
p_{\Thetab,\Gb}(y\mid\xb)
=
\sum_{k=1}^K
\pi(\xb;\alphab_k)
\int_{\ell}^{\infty}
\frac{1}{\sigma}
\phi\!\left(
\frac{y-\xb^\top\betab_k}{\sigma}
\right)
\,dG_k(\sigma).
\]
For
$\xb=(1,\ub^\top)^\top$, define the joint density of $(Y,\Ub)$ by
\[
q_{\Thetab,\Gb}(y,\ub)
=
p_{\Thetab,\Gb}(y\mid\xb)h_0(\ub).
\]
Let $\thetab=(\Thetab,\Gb)$ and
$\widetilde\thetab=(\widetilde\Thetab,\widetilde\Gb)$, and define
\[
\begin{aligned}
d(\thetab,\widetilde\thetab)
={}&
\|\alphab-\widetilde\alphab\|_1
+
\|\betab-\widetilde\betab\|_1\\
&+
\sum_{k=1}^K
\int_{\ell}^{\infty}
|G_k(\sigma)-\widetilde G_k(\sigma)|
e^{-\sigma}\,d\tau(\sigma).
\end{aligned}
\]

The joint density $q_{\Thetab,\Gb}$ satisfies Assumptions 1--3 of
\citet{kiefer1956consistency} by standard measurability and continuity
arguments, and Assumption 4 follows from
Theorem~\ref{thm:ident}, up to label permutation. Moreover, since
each $G_k$ is supported on $[\ell,\infty)$,
\[
q_{\Thetab,\Gb}(y,\ub)
\le
\frac{\|h_0\|_{\infty}}{\ell\sqrt{2\pi}}
\]
uniformly in $(y,\ub)$ and $(\Thetab,\Gb)$. Thus, to verify
Assumption 5, it suffices to show that
\[
\mathbb E_0
\left[
-\log q_{\Thetab_0,\Gb_0}(Y,\Ub)
\right]
<\infty.
\]

Fix $k^*\in\{1,\ldots,K\}$. Since
\[
p_{\Thetab_0,\Gb_0}(y\mid\xb)
\ge
\pi(\xb;\alphab_{0k^*})f_{0k^*}(y\mid\xb),
\]
where
\[
f_{0k}(y\mid\xb)
=
\int_{\ell}^{\infty}
\frac{1}{\sqrt{2\pi}\sigma}
\exp\!\left\{
-\frac{(y-\xb^\top\betab_{0k})^2}{2\sigma^2}
\right\}
\,dG_{0k}(\sigma),
\]
we have
\[
\begin{aligned}
-\log q_{\Thetab_0,\Gb_0}(Y,\Ub)
\le{}&
-\log\pi(\Xb;\alphab_{0k^*})
-\log f_{0k^*}(Y\mid\Xb)
-\log h_0(\Ub).
\end{aligned}
\]

There exist finite constants $C_0$ and $C_1$
such that
\[
\begin{aligned}
-\log\pi(\Xb;\alphab_{0k^*})
&=
\log
\left\{
\sum_{j=1}^K
\exp(\alphab_{0j}^\top\Xb)
\right\}
-
\alphab_{0k^*}^\top\Xb
\\
&\le
C_0+C_1\|\Xb\|.
\end{aligned}
\]
Since
\[
\mathbb E_0\|\Xb\|^2
=
1+\mathbb E_0\|\Ub\|^2
<\infty,
\]
it follows that
\[
\mathbb E_0
\left[
-\log\pi(\Xb;\alphab_{0k^*})
\right]
<\infty.
\]

By Jensen's inequality and Assumption~\textup{(A4)},
\[
\begin{aligned}
-\log f_{0k^*}(y\mid\xb)
&\le
\frac12\log(2\pi)
+
\int_{\ell}^{\infty}
\log\sigma\,dG_{0k^*}(\sigma)
+
\frac{(y-\xb^\top\betab_{0k^*})^2}{2\ell^2}\\
&\le
\frac12\log(2\pi)+M
+
\frac{(y-\xb^\top\betab_{0k^*})^2}{2\ell^2}.
\end{aligned}
\]
Since $\mathbb E_0Y^2<\infty$ and $\mathbb E_0\|\Xb\|^2<\infty$,
\[
\mathbb E_0
\left[
-\log f_{0k^*}(Y\mid\Xb)
\right]
<\infty.
\]

Finally,
\[
\mathbb E_0[-\log h_0(\Ub)]<\infty
\]
by assumption. Hence,
\[
\mathbb E_0
\left[
-\log q_{\Thetab_0,\Gb_0}(Y,\Ub)
\right]
<\infty.
\]
Therefore, Assumption 5 holds. Hence, after a suitable relabeling of
the estimated components,
\[
d\left(
(\widehat{\Thetab}_n,\widehat{\Gb}_n),
(\Thetab_0,\Gb_0)
\right)
\xrightarrow{p}0.
\]

\end{document}